\def\myupbracefill#1{\rotatebox{90}{\stretchto{\{}{#1}}}
\def\rlwd{.5pt}
\newcommand\notate[4][B]{%
  \if B#1\else\def\myupbracefill##1{}\fi%
  \def\useanchorwidth{T}%
  \setbox0=\hbox{$\displaystyle#2$}%
  \def\stackalignment{c}\stackunder[-6pt]{%
    \def\stackalignment{c}\stackunder[-1.5pt]{%
      \stackunder[2pt]{\strut $\displaystyle#2$}{\myupbracefill{\wd0}}}{%
    \rule{\rlwd}{#3\baselineskip}}}{%
  \strut\kern9pt$\rightarrow$\smash{\rlap{$~\displaystyle#4$}}}%
}
\newcommand{\bA}{\mathbf{A}}
\newcommand{\bB}{\mathbf{B}}
\newcommand{\bC}{\mathbf{C}}
\newcommand{\bF}{\mathbf{F}}
\newcommand{\be}{\mathbf{e}}
\newcommand{\bg}{\mathbf{g}}
\newcommand{\bz}{\mathbf{z}}
\newcommand{\bn}{\mathbf{n}}
\newcommand{\bp}{\mathbf{p}}
\newcommand{\bs}{\mathbf{s}}
\newcommand{\bu}{\mathbf{u}}
\newcommand{\bv}{\mathbf{v}}
\newcommand{\bx}{\mathbf{x}}
\newcommand{\bq}{\mathbf{q}}
\newcommand{\bX}{\mathbf{X}}
\newcommand{\bzero}{\mathbf{0}}
\newcommand{\dgds}{\frac{\partial \bg}{\partial \bs}}
\newcommand{\phases}{\mathbf{v}}
\newcommand{\levelset}{\mathbf{\phi}}
\newcommand{\deformed}{\mathbf{x}}
\newcommand{\enriched}{\hat{\bx}}
\newcommand{\eigvec}{\mathbf{v}}
\newcommand{\eigval}{\lambda}
\newtheorem{theorem}{Theorem}[section]
\newcommand{\Changes}[1]{\textcolor{black}{#1}}
\begin{document}



\title{Differentiable Stripe Patterns 
\\for Inverse Design of Structured Surfaces}

\author{Juan Montes Maestre}
\affiliation{%
  \institution{ETH Z{\"u}rich}
  \country{Switzerland}
}
\email{jmontes@inf.ethz.ch}

\author{Yinwei Du}
\affiliation{%
  \institution{ETH Z{\"u}rich}
  \country{Switzerland}
}
\email{yinwei.du@inf.ethz.ch}

\author{Ronan Hinchet}
\affiliation{%
  \institution{ETH Z{\"u}rich}
  \country{Switzerland}
}
\email{ronanhinchet@gmail.com}

\author{Stelian Coros}
\affiliation{%
  \institution{ETH Z{\"u}rich}
  \country{Switzerland}
}
\email{stelian@inf.ethz.ch}

\author{Bernhard Thomaszewski}
\affiliation{%
  \institution{ETH Z{\"u}rich}
  \country{Switzerland}
}
\email{bthomasz@ethz.ch}

\renewcommand{\shortauthors}{Montes et al.}

\begin{abstract}

Stripe patterns are ubiquitous in nature and everyday life. While the synthesis of these patterns has been thoroughly studied in the literature, their potential to control the mechanics of structured materials remains largely unexplored. In this work, we introduce Differentiable Stripe Patterns---a computational approach for automated design of physical surfaces structured with stripe-shaped bi-material distributions. 
Our method builds on the work by Kn{\"o}ppel and colleagues \shortcite{Knoeppel15Stripe} for generating globally-continuous and equally-spaced stripe patterns. To unlock the full potential of this design space, we propose a gradient-based optimization tool to automatically compute stripe patterns that best approximate macromechanical performance goals. Specifically, we propose a computational model that combines solid shell finite elements with XFEM for accurate and fully-differentiable modeling of elastic bi-material surfaces. To resolve non-uniqueness problems in the original method, we furthermore propose a robust formulation that yields unique and differentiable stripe patterns. 
We combine these components with equilibrium state derivatives into an end-to-end differentiable pipeline that enables inverse design of mechanical stripe patterns. 
We demonstrate our method on a diverse set of examples that illustrate the potential of stripe patterns as a design space for structured materials. Our simulation results are experimentally validated on physical prototypes.
\end{abstract}

%
%

%
%



\begin{teaserfigure}
\includegraphics[width=7.0in]{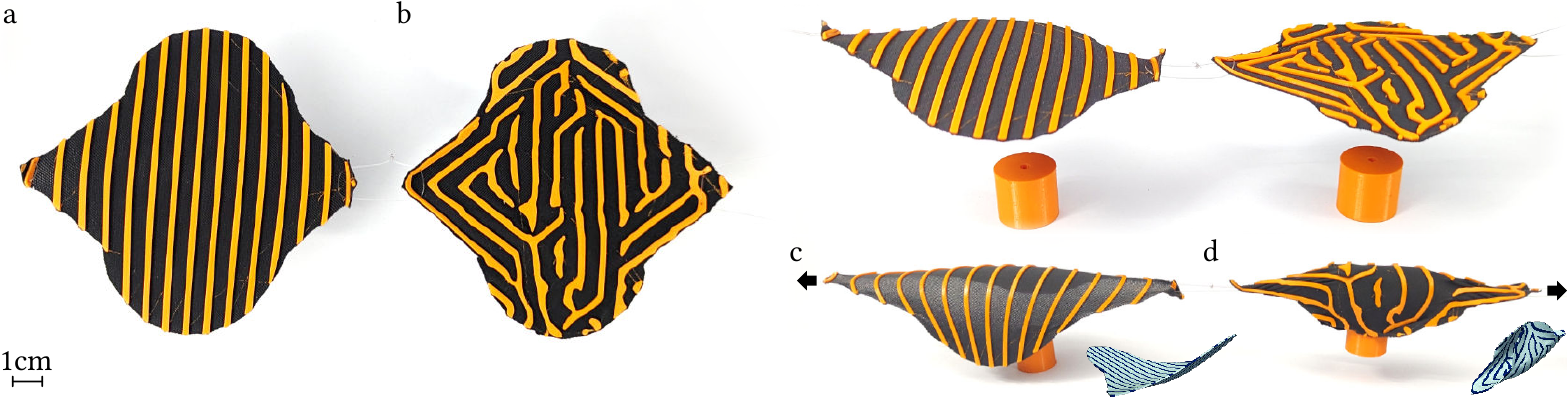}
 \caption{Optimization of a compliant gripper using Differentiable Stripe Patterns. An initial design with parallel stripes (\textit{a}) and an optimized design (\textit{b}) deforms out of plane under actuation, but the initial design does not produce sufficient closure (\textit{c}). Our optimized design (\textit{d}) produces significantly larger lateral deflection for the same actuation in simulation, as shown in the rendered insets. The corresponding physical prototype confirms this prediction and is able to enclose and lift a small object. }
 \label{fig:teaser}
\end{teaserfigure}

\maketitle
\section{Introduction}

From soft to stiff, from isotropic to anisotropic, and from homogeneous to functionally-graded---designing materials with tailored mechanical properties is a central problem in many fields of science and engineering. Structured materials are particularly interesting in this context since their macromechanical behavior can be controlled through their microscale geometry.
Here we consider a particular class of structured surfaces that are quasi-inextensible in a given direction while being compliant in the orthogonal direction. By varying the principal directions across the surface, these materials can achieve a broad range of macromechanical effects, making them interesting for applications in, e.g., sportswear, orthotics, and robotics (see Fig. \ref{fig:teaser}). Designing structured surfaces that lead to desired mechanical behavior, however, is a challenging problem.

In this work, we propose Differentiable Stripe Patterns---a computational approach to performance-oriented design of structured surfaces. We draw inspiration from the work of Kn{\"o}ppel et al. \shortcite{Knoeppel15Stripe} who proposed a method for generating equally-spaced, globally-continuous stripe patterns on arbitrary surfaces. Our key observation is that, when interpreted as bi-material distributions, stripe patterns form an ideal design space for structured surfaces with high stiffness contrast.
To unlock the full potential of this material space, we envision an inverse design tool that automatically computes 
stripe patterns that lead to an optimal approximation of high-level performance goals.

To implement this vision, we must overcome a number of challenges. First, stripe patterns are generated from tangent vector fields by solving a complex eigenvalue problem (EVP). Due to intrinsic symmetries, eigenvalues of this EVP always occur in pairs. This nontrivial multiplicity means that eigenvectors are not unique---they form an eigenplane---and derivatives do not exist.
Second, predicting the mechanics of bi-material distributions with macroscopic stripe patterns requires accurate modeling of material interfaces. While conforming discretizations are an obvious choice for static interfaces, finite changes in stripe patterns require remeshing to maintain valid tesselations. Such discrete changes are highly problematic for gradient-based optimization. 
Third, to obtain valid stripe patterns, eigenvectors must be further processed with vector normalizations and other nonsmooth operations. Derivatives of these operations are discontinuous or diverge at singularities, which is again highly problematic for optimization-based design.

Our Differentiable Stripe Patterns integrate dedicated solutions to each of these problems:
\begin{itemize}
\item We achieve robust evolution of eigenvectors by decomposing the problem of finding the \textit{best} vector from an eigenplane into the much simpler sub-problems of finding \textit{a}) \textit{some} reference vector from the eigenplane and \textit{b}) the \textit{optimal} in-plane rotation with respect to this vector. This unique and differentiable parameterization of the eigenplane yields well-defined stripe pattern derivatives.
\item To allow for continuous motion of material interfaces without remeshing, we propose a computational model that combines solid shells with the extended finite element method (XFEM).
Solid shells are made of prismatic elements that derive their response to bending from differential stretching through the thickness. Unlike discrete bending models common in graphics, solid shells follow standard finite element theory and thus integrate seamlessly with XFEM.
\item We remove and replace non-smooth operations and introduce a design-space regularizer that prevents numerical singularities with logarithmic barrier functions. 
\end{itemize}

Collectively, these contributions combine into a powerful and fully differentiable model for stripe pattern materials. In combination with adjoint sensitivity-analysis, Differentiable Stripe Patterns enable gradient-based optimization of high-level design objectives with opportunities for broad applications. We illustrate our method on a set of inverse design examples, including fabrics with 3D-printed reinforcements, structural textures for thin-walled 3D-prints, a compliant shell gripper, and personalized insoles.
We validate our examples through physical prototypes and observe good agreement between simulation and real-world behavior. 





\section{Related work}

\paragraph{Structured Materials}
Designing structured materials with desired macromechanical behaviors is an active field of research in material science and engineering \cite{bertoldi2017flexible}. 
With the widespread availability of 3D printing, the graphics community has likewise started to investigate the creation of metamaterials with lattice- \cite{panetta2015elastic,panetta2017worst,Gongora22Designing}, voxel- \cite{Bickel10Design,schumacher2015microstructures,zhu2017two}, and foam-like \cite{martinez2016procedural,martinez2017orthotropic} microstructures. A particular line of research focuses on two-dimensional, sheet-like materials \cite{Schumacher18Mechanical,martinez2019star,tozoni2020low,Leimer2020,Malomo18Flexmaps}.
Our work also targets sheet and thin shell materials, but focuses on global design optimization instead of periodically tileable material cells.

Closest to our setting in the context of material design is arguably the work by Tricard et al. \shortcite{tricard2020freely} for creating microstructures with freely-orientable channels. These thin-walled tubes are generated such as to follow a user-provided orientation field using a stochastic process. The resulting structures are extremely anisotropic, showing high stiffness along channels but compliance in the orthogonal plane. Our differentiable stripe patterns similarly target materials with high stiffness contrast along locally controllable directions. However, while Tricard et al. describe a forward geometry design process that is not informed by simulation, we propose an inverse material design approach in which stripe patterns are generated automatically such as to obtain desired mechanical behaviors.

Although not a primary focus of our work, we show through examples that Differentiable Stripe Patterns can be used to modulate the mechanical properties of conventional textiles using a print-on-fabric process similar to \cite{Perez:2017:CDA,Jourdan2021Printing}. Instead of using 3D-printing, an alternative strategy would be to use embroidery for reinforcement, as demonstrated by Moore et al. \shortcite{Moore2018} and Sati et al. \shortcite{Sati21Digisew}.


\paragraph{Eigenvector Optimization}
Since stripe patterns emerge as the solution to a generalized eigenvalue problem, optimizing for eigenvectors is at the core of our method. Designing mechanical systems with desired spectral properties is a problem that has occurred frequently in graphics literature. 
Applications include metallophones that produce desired sounds \cite{umetani10Designing,Bharaj15Computational,Musialski:2016:NSO},
coarse-level simulations \cite{Chen17Dynamics,Chen19EigenFit} and differential geometry operators \cite{Liu19Spectral,Chen20Spectral}  that preserve spectral properties, 
as well as mechanical assemblies that are robust to perturbations \cite{Thomaszewski14Computational,liu2022rigidity}.
A specific challenge in our setting is that all eigenvalues have nontrivial geometric multiplicity, making eigenvectors non-unique and derivatives undefined. By identifying the origins of this inherent multiplicity, we develop a strategy to obtain unique reference vectors in arbitrary eigenplanes and express eigenvectors in this space through an additional rotation parameter.



\paragraph{Designing Physical Surfaces}
As opposed to purely geometric design, physical surfaces are subject to equilibrium constraints. Applications include inflatable membranes \cite{Skouras14DIS,Panetta2021}, surfaces made from elastic curves \cite{Panetta19XShells,Zehnder16Ornamental,Pillwein2021,Neveu22Stability}, as well as auxetic \cite{KonakovicLukovic2018,Chen2021,JIANG2022103146} and tension-actuated \cite{Perez:2017:CDA,Guseinov17Curveups,Jourdan2022Computational} deployable structures. 
Adding to this line of work, we identify stripe patterns as a powerful paradigm for physical surfaces and present a differentiable pipeline for inverse design in that space.


\begin{figure*}[h]
    \centering
    \includegraphics[width = 1.0 \linewidth]{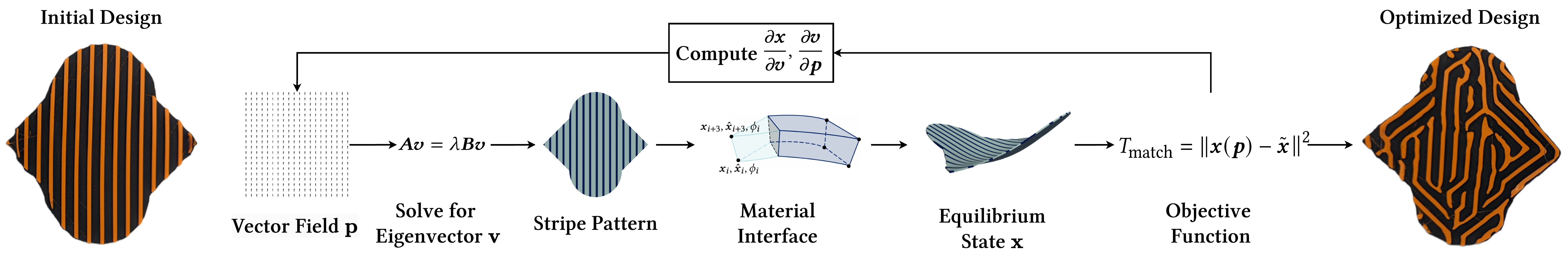}
    \caption{Overview of our differentiable stripe pattern pipeline. Starting from an initial input vector field, we compute the corresponding stripe pattern by solving a generalized eigenvalue problem (Sec. \ref{sec:stripePatterns}). To accurately model the mechanics of bi-material surfaces structured with such patterns, we combine solid shell finite elements with XFEM (Sec. \ref{sec:computationalModel}). 
    To enable gradient-based minimization of high-level design objectives,  we must compute derivatives of eigenvectors (Sec. \ref{sec:DSS}), material interface, and equilibrium states with respect to vector field parameters $\bp$. These design gradients are then used to automatically compute stripe patterns that best approximate macromechanical performance goals.
    }
    \label{fig:pipeline}
\end{figure*}

\paragraph{Material Interfaces and XFEM}
Our method uses stripe patterns to define bi-material distributions. This leads to the question of how to best model material interfaces that, in general, do not align with element boundaries in the simulation mesh.
XFEM is a technique for modeling material interfaces with displacement or strain discontinuities without mesh adaptation \cite{moes2003computational}. In the graphics literature, XFEM has been used, e.g., for cutting shells \cite{Kaufmann09Enrichment} and deformable solids \cite{Koschier2017}, metamaterial design \cite{zehnder2017metasilicone}, and for streamlined simulation of CAD models \cite{Hafner:2019}.
To unlock stripe patterns as a design space for structured surfaces, we seek an efficient shell model that is amenable to XFEM. While the method by Kaufmann et al. \shortcite{Kaufmann09Enrichment} is an option in principle, the high polynomial order of the bi-cubic patches together with the texture-based enrichment approach and discontinuous Galerkin formulation translate into excessive complexity for inverse design. While simpler discrete shell models from the graphics community \cite{grinspun2003discrete,Bridson02Simulation,Garg07Cubic,Chen18Physical}  offer favorable trade-offs between accuracy and computation cost, their discrete nature cannot be combined with standard XFEM approaches. We therefore propose a solid shell model \cite{Hauptmann98SolidShells,Ko17A6Node} that uses bi-linear triangular prism elements which, as we show in our analysis, offer good accuracy-performance trade-offs for the range of problems that we consider while integrating seamlessly with XFEM.


\paragraph{Stripe Patterns in Material Design}
Stripe patterns have recently seen increasing attention for material design and structural optimization. 
For example, Boddeti et al. \shortcite{Boddeti2020} use stripe patterns to guide the design of continuous reinforcement fields within a fiber-in-matrix approach.
Panetta et al. \shortcite{Panetta2021} proposed an approach for designing inflatables that deploy into desired shapes upon pressurization. Stripe patterns are used for initializing air channels, which are then optimized such as to best approximate a given target shape at equilibrium. 
Jourdan et al. \shortcite{Jourdan2022Computational} likewise use stripe patterns for initialization when designing reinforcement curves for self-deploying fabric models.
We share the excitement of these works for stripe patterns as a material design space. Instead of mere initialization, however, we propose a fully differentiable end-to-end pipeline enabling gradient-based design optimization within the space of stripe patterns.

\section{Overview}
The goal of our method is to automatically compute stripe patterns that, when used as bi-material distributions, lead to desired mechanical behavior. A visual summary of our method is shown in Fig. \ref{fig:pipeline}. To evaluate the performance of a design, we must compute its deformed equilibrium state given applied loads and boundary conditions using simulation. Determining the design parameters that best approximate a desired behavior requires inverting the entire design pipeline from input vector field to high-level mechanical function. This inversion notably includes derivatives of stripe patterns with respect to input vector fields, and derivatives of the equilibrium states with respect to material interfaces.

We start with a minimal description of Stripe Patterns \cite{Knoeppel15Stripe} in Sec. \ref{sec:stripePatterns}, including an analysis of eigenvalue multiplicity. To use Stripe Patterns for inverse design of bi-material distributions, we propose a solid shell model with extended finite elements to allow for moving material interfaces (Sec. \ref{sec:computationalModel}). Building on this computational model, we introduce Differentiable Stripe Patterns in Sec. \ref{sec:DSS}, including eigenvector derivatives and design objectives. 
We present design examples and further analysis in Sec. \ref{sec:results}.

\newcommand{\dprime}{{d^{\prime}}}
\newcommand{\dhat}{\hat{d}}

\section{Stripe Patterns}
\label{sec:stripePatterns}
The method by Kn{\"o}ppel et al. \shortcite{Knoeppel15Stripe} generates globally continuous, equally-spaced stripe patterns on arbitrary triangle meshes. It accepts as input a triangle mesh with $n$ vertices $\bx=(\bx_1, \ldots, \bx_n)$ as well as an initial vector field $\bz=(\bz_1, \ldots,\bz_n)$ with each $\bz_i \in \mathbb{R}^3$ indicating the vector field evaluated at vertex $\bx_i$. 
The goal of Stripe Patterns is to find an angle $\alpha_i$ per vertex that indicates its phase, i.e., the location in a periodic function that determines which material will be assigned to each point. To make this angle function globally periodic and smooth, angles are represented as complex numbers $\Psi_i\in\mathbb{C}$ with $\alpha_i=\arg \Psi_i$. Stripes should be locally orthogonal to the input vector field, i.e., the gradient of the angle function should be collinear to the input vector field\Changes{;} $\nabla_{(u,v)} \alpha(u,v)= \bz(u,v)$ where $(u,v)$ are local coordinates.
In the discrete setting, changes in angle are measured along edges $\be_{ij}$ of the triangulation by linear interpolation and integration as
\begin{equation}
    \omega_{ij}= \frac12(\be_{ij}^T\bz_i+\be_{ij}^T\bz_j) \ .
\end{equation}
In general, the change in angle cannot perfectly agree with the vector field, unless $\bz$ is integrable. For this reason, the mismatch between target change and actual change along all edges is minimized as 
\begin{equation}
\label{eq:mismatchPhaseEnergy}
    \Changes{
    E_\Psi=\sum_{ij\in \mathcal{E}}w_{ij}|\Psi_j-e^{\iota\omega_{ij}}\Psi_i|^2 ,}
\end{equation}
where $\mathcal{E}$ denotes the edge index set and $w_{ij}$ is a mesh-dependent weight. To arrive at an algorithm with only real arithmetics, per-vertex phases are expanded into real and imaginary parts as \Changes{$\Psi_i=a_i+\iota b_i\rightarrow(a_i,b_i)$} which we store in a real-valued vector \Changes{$\eigvec\in\mathbb{R}^{2n}$}. Furthermore, since the above energy is quadratic in phases, the real-valued analogue can be expressed as \Changes{$\frac12 \eigvec^T\bA\eigvec$} with a matrix $\bA$ that depends on the input mesh and vector field.
To eliminate the trivial minimizer \Changes{$\eigvec=0$}, Kn{\"o}ppel et al. pose the \Changes{constrained} optimization problem 
\begin{equation}
\label{eq:EVPOptProblem}
         \Changes{\min_{\eigvec} \frac12 \eigvec^T\bA\eigvec \quad \text{s.t.} \quad \frac12\eigvec^T\bB \eigvec = 1 \ ,}
\end{equation}
where $\bB$ is a \Changes{lumped} mass matrix whose per-vertex entries are computed by summing up areas of incident triangles. Considering the first part of the first-order optimality conditions
\begin{equation}
\label{eq:generalizedEVP}
    \Changes{
    \bA\eigvec-\lambda\bB\eigvec = \bzero \ , 
    }
\end{equation}
it is clear that solutions to this \Changes{constrained} optimization problem are generalized eigenvectors of $\bA$.
Consequently, eigenvectors corresponding to the smallest eigenvalue will minimize the mismatch energy across all eigenvectors of $\bA$. This vector is then further processed through component-wise normalization, nonlinear per-triangle interpolation, and other filtering operations to yield per-vertex \textit{texture coordinates} that ultimately determine the \textit{color}---or, in our case, material identity---of each point on the mesh. 
We refer to this process as the \textit{forward stripe problem}. For our method, we aim to compute stripe patterns that lead to desired mechanical performance. Solving this \textit{inverse stripe problem} involves computing derivatives of eigenvectors of $\bA$ with respect to the input vector field, which is made challenging due to the following observation.
\begin{theorem}
All generalized eigenvalues of $\bA$ have geometric multiplicity two, i.e., all eigenvalues are duplicate and the dimension of the associated eigenspace is two.
\end{theorem}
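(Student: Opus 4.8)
The plan is to make the ``intrinsic symmetry'' behind the multiplicity explicit and use it to pin down every eigenspace to a plane. The phases $\Psi_i = a_i + \iota b_i$ are complex numbers, and the real vector $\eigvec \in \mathbb{R}^{2n}$ is nothing but the realification of $\Psi \in \mathbb{C}^n$. Multiplying all phases by $\iota$ therefore corresponds to a fixed real-linear map $\bJ$ on $\mathbb{R}^{2n}$, acting on each vertex block by $(a_i,b_i) \mapsto (-b_i, a_i)$, with $\bJ^2 = -\bI$ and $\bJ^T = -\bJ$ (it is orthogonal). My first step is to record that both matrices of the pencil commute with $\bJ$. For $\bB$ this is immediate: the lumped mass matrix weights $a_i$ and $b_i$ by the same area $m_i > 0$, so on vertex $i$ it is the scalar block $m_i \bI$, hence symmetric positive definite and commuting with $\bJ$. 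For $\bA$ it follows from a symmetry of the mismatch energy \Eq{mismatchPhaseEnergy}: replacing $\Psi$ by $\iota\Psi$ factors an $\iota$ out of every term $\Psi_j - e^{\iota\omega_{ij}}\Psi_i$ and leaves its modulus unchanged, so $E_\Psi(\iota\Psi) = E_\Psi(\Psi)$; written at the level of the quadratic form $\frac12 \eigvec^T\bA\eigvec$ and using that $\bA$ and $\bJ^T\bA\bJ$ are both symmetric, this gives $\bJ^T\bA\bJ = \bA$, and left-multiplying by $\bJ$ together with $\bJ^2 = -\bI$ yields $\bA\bJ = \bJ\bA$.

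The remainder is linear algebra. If $\bA\eigvec = \lambda\bB\eigvec$ with $\eigvec \neq \bzero$, then $\bA(\bJ\eigvec) = \bJ\bA\eigvec = \lambda\bB(\bJ\eigvec)$, so $\bJ$ maps the generalized eigenspace $V_\lambda$ into itself; moreover $\eigvec$ and $\bJ\eigvec$ are linearly independent over $\mathbb{R}$, since $\bJ\eigvec = c\,\eigvec$ would force $c^2 = -1$. Hence $\mathrm{span}\{\eigvec, \bJ\eigvec\} \subseteq V_\lambda$, so every generalized eigenvalue is at least double; and since $V_\lambda$ is $\bJ$-invariant it is a complex subspace of $(\mathbb{R}^{2n}, \bJ) \cong \mathbb{C}^n$, so $\dim_{\mathbb{R}} V_\lambda$ is even. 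To sharpen ``even and $\ge 2$'' to ``exactly $2$'', I would pass to the complex picture: because $\bA$ is symmetric and commutes with $\bJ$ it is the realification of an $n\times n$ Hermitian matrix $\widehat{\bA}$, and $\bB$ of the real positive diagonal mass matrix $\widehat{\bB}$; a real generalized eigenspace of dimension $2k$ corresponds to a $k$-dimensional eigenspace of the Hermitian pencil $(\widehat{\bA}, \widehat{\bB})$, so the real geometric multiplicity is exactly two precisely when the spectrum of $(\widehat{\bA}, \widehat{\bB})$ is simple.

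I expect that last point to be the only genuine obstacle. The $\bJ$-symmetry argument delivers the doubling unconditionally, but ``exactly two'' is equivalent to simplicity of an $n \times n$ Hermitian eigenproblem, which is generic but can fail for meshes and vector fields carrying extra combinatorial symmetry. For a fully rigorous statement I would either phrase the theorem for the generic case or impose a mild non-degeneracy hypothesis on the input $(\bx, \bz)$; in any case, what the downstream method actually needs is the two-dimensional ``eigenplane'' $\mathrm{span}\{\eigvec, \bJ\eigvec\}$, which the symmetry argument exhibits directly, with $\bJ\eigvec$ providing the second basis vector.
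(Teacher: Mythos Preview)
Your argument is essentially the paper's own: the paper also rotates every phase by $\pi/2$, i.e., replaces $\Psi$ by $\iota\Psi$, observes that the energy and the $\bB$-norm are unchanged, and concludes that the rotated vector $\eigvec^\perp$ with components $(-b_j,a_j)$ is a second eigenvector for the same $\lambda$. Your map $\bJ$ is exactly this rotation written as a linear operator, and your commutation identities $\bA\bJ=\bJ\bA$, $\bB\bJ=\bJ\bB$ are the clean algebraic form of the paper's invariance observation.

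Where you go beyond the paper is in separating the lower bound from the upper bound. The paper's proof, like your $\bJ$-argument, only establishes that every eigenspace has dimension \emph{at least} two; it never argues the ``exactly two'' part of the claim. You are right that this amounts to simplicity of the underlying Hermitian pencil $(\widehat{\bA},\widehat{\bB})$ on $\mathbb{C}^n$, which holds generically but can fail under extra symmetry. The paper silently assumes this genericity, whereas you flag it and correctly note that the downstream method only needs the two-dimensional plane $\mathrm{span}\{\eigvec,\bJ\eigvec\}$, which your argument supplies unconditionally.
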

\begin{proof}
To see this, we observe that Eq. (\ref{eq:mismatchPhaseEnergy}) measures only differences between phases along edges---rotating all per-vertex phases by the same angle will not change the energy. Consider a given eigenvector $\bv$ and its corresponding phase vector $\Psi$. Let $\Psi^\perp$ denote the phase vector defined as \Changes{$ \Psi_j^\perp= \iota\Psi_j$}, i.e., with per-vertex phases rotated by $\pi/2$. The corresponding real-valued vector is $\bv^\perp$ with $\bv^\perp_j=(-b_j,a_j)$. Since $\bv^\perp$ has the same norm and energy as $\bv$, it must also be an eigenvector for the same eigenvalue.
\end{proof}

\begin{figure}[t]
    \centering
    \includegraphics[scale=1.0]{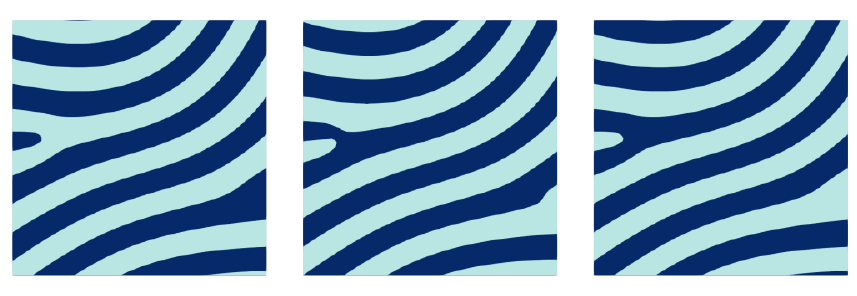}
    \caption{Sequence of stripe patterns corresponding to eigenvectors from a two-dimensional eigenspace. Linear motion in this eigenspace corresponds to coordinated phase changes that yield travelling waves.}
    \label{fig:eigenvectorNullspace}
\end{figure}

The consequence of this observation is that eigenvectors are not unique and derivatives not well-defined. Fig. \ref{fig:eigenvectorNullspace} illustrate the space of eigenvectors for a given eigenplane; see also the accompanying video for an animation. It can be seen that motion in this eigenspace corresponds to globally coordinated phase changes, yielding waves travelling across the surface. To compute derivatives of stripe patterns with respect to the input vector field, we must make eigenvectors unique and their derivatives well-defined. Before we describe our solution to this problem in Sec. \ref{sec:DSS}, we first introduce our computational model.

FF\section{Solid Shells and XFEM}
\label{sec:computationalModel}

Stripe patterns give rise to piece-wise linear material interfaces which, in general, will cross through elements of the underlying triangulation. To predict the mechanical behavior of a given stripe pattern, interfaces between soft and rigid materials must be modeled accurately. 
Since we are interested in bi-material distributions with high stiffness ratio, simple interpolation of material parameters is not an option \cite{kharevych2009numerical}.
Conforming meshes would avoid material interpolation but changing interfaces require remeshing, which is not differentiable. 
For this reason, we opt for an extended finite element method (XFEM) that models material interfaces with additional degrees of freedom \cite{moes2003computational}. As its central benefit, this approach allows for continuous evolution of material interfaces without changes to the underlying mesh. The additional degrees of freedom, so called enrichment coordinates, allow for strain discontinuities across the interface, which is precisely what is needed in our context.

\subsection{Solid Shells}
To model structured surfaces that can stretch and bend, we combine XFEM with solid shell finite elements---volumetric elements in the shape of triangular prisms \cite{Ko17A6Node}.
Given a triangle mesh as input, we start by creating two offset surfaces by extruding mid-surface vertices along their vertex normals. The resulting extruded surfaces model the top and bottom layers of the structured surface and are used to define a set of six-node triangular prism elements as shown in Fig. \ref{fig:prismElement}.
We use a standard isoparametric approach and define the deformed geometry for each element as 
\begin{equation}
    \bx(\bu) = \sum_{i=1}^6 \bx_iN_i(\bu) \ ,
\end{equation}
and analogously for the rest state geometry $\bX$.
In these expressions, $\bu=(u,v,t)$ are local coordinates with $(u,v)$ parametrizing the mid-surface triangle and the third coordinate $t\in[-h/2,h/2]$ runs along the normal direction of the shell, with $h$ indicating its thickness. The six bi-linear basis functions $N_i$ are uniquely defined by the interpolation property $ \Changes{ N_i(\bu_j) = \delta_{ij}}$. It is worth noting that, for any fixed $t$, the in-plane coordinates define a linear triangle element with constant strain. Even though deformation is constant for any given thickness value, it varies linearly through the thickness. It is this linear variation that, as we will see, allows solid shells to perform significantly better for bending deformation than constant-strain linear tetrahedra.

\begin{figure}[h]
    \centering
    \includegraphics[scale=1.0]{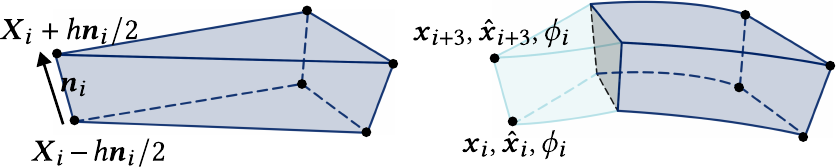}
    \caption{Solid Shell Elements. \textit{Left}: a bi-linear prism element is generated by extruding a triangle of the input mesh in the direction of its vertex normals. \textit{Right}: level set values $\phi_i$ define the location of a material interface and additional enrichment DOFs $\hat{\bx}_i$ allow for different strains on the two sides. 
    }
    \label{fig:prismElement}
\end{figure}

Using the continuous approximations of element geometry, all other kinematic quantities are obtained in the standard form, i.e.,
\begin{equation}
    \bF(\bu)=\frac{\partial \bx}{\partial \bX}|_{(u,v,t)} \, \quad \bC(\bu)=\bF(\bu)^T\bF(\bu) \ ,
\end{equation}
where we made explicit the fact that, unlike for linear elements, the deformation gradient $\bF$ and the right Cauchy-Green tensor $\bC$ vary across the element.
 
\paragraph{Strain Energy}
We use a standard Neo-Hookean constitutive law as the basis for both soft and stiff materials. The energy density of this material is defined as 
\begin{equation}
    \Psi_\mathrm{Iso}=\frac{1}{2}\left[
    \mu(\text{tr}(\bC)-3)
    -2\mu\ln J
    +\lambda(\ln J)^2 
    \right]
    \ ,
\end{equation}
where $\lambda$ and $\mu$ are the Lam{\'e} parameters.
We use different application-dependent material parameters for soft and stiff materials, see Sec. \ref{sec:results}. If orthotropic behavior is desired, the isotropic base material can be augmented with stiffening fibers as
\begin{equation}
    \Psi_\mathrm{Ortho}=\Psi_\mathrm{Iso} + \frac{1}{2}\beta_f\bn_f^T\bC\bn_f   \ ,
\end{equation}
where $\bn_f$ is the unit vector indicating the fiber orientation and $\beta_f$ is a material parameter that reflects the fiber stiffness and density; see, e.g., \cite{Holzapfel01AViscoelastic}.


We then obtain the elastic energy of a given solid shell element by integrating the strain energy density across the element. In practice, we approximate this integral using numerical quadrature, 
\begin{equation}
U_e=\int_\Omega \Psi(\bC(\bu)) \ dV \approx \sum_j w_j\Psi(\bC(\bq_j)) \left|\frac{\partial \bX}{\partial \bq} \right|\ ,
\end{equation}
where $\bq_j=(u_j,v_j,t_j)$ are quadrature points in generic coordinates, and $w_j$ are corresponding quadrature weights. For elements with a single material, we use a six-point numerical quadrature scheme with three points per thickness value. 
Elements crossed by a material interface require more elaborate treatment, as explained next.

\subsection{Material Interfaces \& XFEM}

\paragraph{Level Sets for Material Interfaces}
To model the bi-material distributions induced by stripe patterns, we must determine the location of the material interfaces.
We model these interfaces using a level set $\levelset$ defined through a set nodal values $\levelset_i\in\mathbb{R}$. The nodal values represent the signed distance to the interface, with the sign indicating whether the node is on the soft or stiff side.
To convert a given stripe pattern to its corresponding level set,  we first retrieve the per-vertex angle $\alpha_i = \mathrm{arg}(\phases_i)$ from the corresponding eigenvector component $\phases_i$. We then compute level set values using a smoothed triangle wave as transfer function,
\begin{equation}
    \label{eq:levelset}
    \Changes{
    \phi_i=1-\frac{2\mathrm{arccos}[(1-a_1)\mathrm{sin}( \alpha_i-\frac \pi 2)]}{\pi}-a_2,}
\end{equation}
where $a_1$ is a smoothing term and $a_2$ determines the ratio between stiff and soft material by vertically translating the cut-off value, see Fig. \ref{fig:triangleWaveFunction}.

\begin{figure}[h]
    \centering
    \includegraphics[scale=1.0]{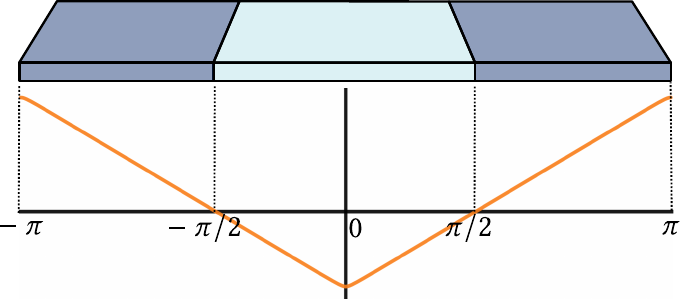}
    \caption{Quasi-linear transfer function for mapping per-vertex angles $\alpha_i$ into level set values $\phi_i$.  Smooth junctions between linear segments ensure differentiability. The material interface is located at $\phi=0$,  $\phi>0$ corresponds to stiff materials, whereas $\phi<0$ indicates soft material. The ratio between stiff and soft material can be controlled by translating this curve vertically. }
    \label{fig:triangleWaveFunction}
\end{figure}

\paragraph{XFEM}
Elements whose nodes have level set values $\levelset_i$ with different signs are crossed by a material interface. To accurately model the behavior of these bi-material elements we resort to an extended finite element approach. While the location of the interface is given by the level set values $\levelset_i$, allowing for different strains on opposite sides of the interface requires additional degrees of freedom, the so called \textit{enrichment coordinates} $\enriched$. Enrichment coordinates $\enriched_i$ are collocated with the standard nodal DOFs $\bx_i$ but have special basis functions.

\begin{figure}[h]
    \centering
    \includegraphics[scale=1.0]{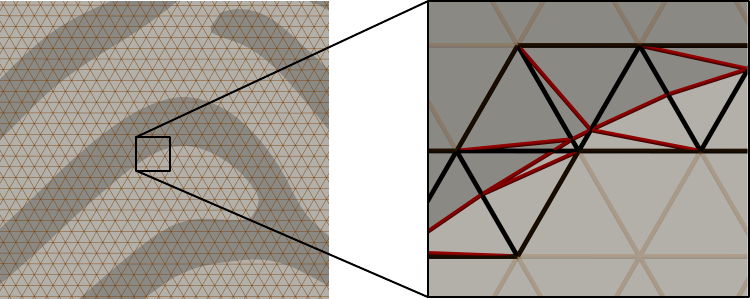}
    \caption{\Changes{XFEM discretization of stripe patterns with regions of stiff (\textit{dark gray}) and soft (\textit{light gray}) material as indicated. Elements crossed by an interface (\textit{black}) are split into three sub-elements (\textit{red}) to integrate their elastic energy.}  }
    \label{fig:xfemDiscretization}
\end{figure}

We choose the ridge function \cite{moes2003computational}, which was designed for material interfaces that are continuous in displacements but discontinuous in strain ($C^0$ but not $C^1$). 
For simplicity, we only allow one interface per element, 
although generalizations are possible \cite{zehnder2017metasilicone}. This can be achieved by adjusting the frequency of the stripe pattern for a given input mesh, and vice-versa.
The ridge enrichment function is given by
\begin{equation}
    \psi=\sum_{i}|N_i\levelset_{i}|-|\sum_{i}N_i\levelset_{i}| \ ,
\end{equation}
with the corresponding interpolation function for the nodal positions
\begin{equation}
    \deformed=\sum_{i}{N_i(\bu)\deformed_{i}}+\psi\sum_{i}{N_i(\bu)\enriched_{i}} \ .
\end{equation}
We refer to Zehnder et al. \shortcite{zehnder2017metasilicone} for a visual construction of these functions.
The strain in the enriched elements varies throughout the element, with a discontinuity located at the interface. To properly integrate over these enriched elements, we subdivide them into three sub-prisms and evaluate the sub-integrals numerically using an 11-point quadrature rule \Changes{(Fig. \ref{fig:xfemDiscretization})}. 

\subsection{Evaluation}
Solid shells have, to the best of our knowledge, not been explored in graphics literature before. 
We therefore perform a series of experiments intended to test the accuracy and convergence behavior of solid shells in comparison to other models. In particular, we compare to conforming discretizations with four-node linear tetrahedron elements and ten-node quadratic tetrahedron elements, as well as conforming bi-linear prisms and a discrete shell model \Changes{\cite{gingold2004discrete}}.
We use two simple experiments in which we impose different periodic boundary conditions---cylindrical bending (radius $r=10cm$) and uni-axial loading (stretch $\varepsilon=10\%$)---onto a square plate with dimensions $7cm\times7cm$. 
\begin{figure}[h]
    \centering
    \includegraphics[scale=1.0]{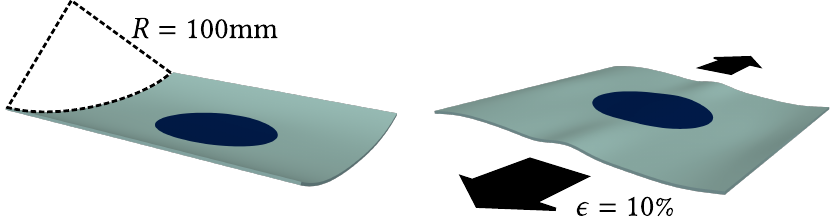}
    \caption{Problem setups for shell comparison. A square plate is endowed with a disc-shaped region of stiff material (\textit{dark blue}) subject to periodic boundary conditions that impose cylindrical curvature (\textit{left}) and uni-axial stretching (\textit{right}) as indicated.}
    \label{fig:solidShellExperiments}
\end{figure}

\begin{figure}[b]
    \centering
    \includegraphics[scale=1.0]{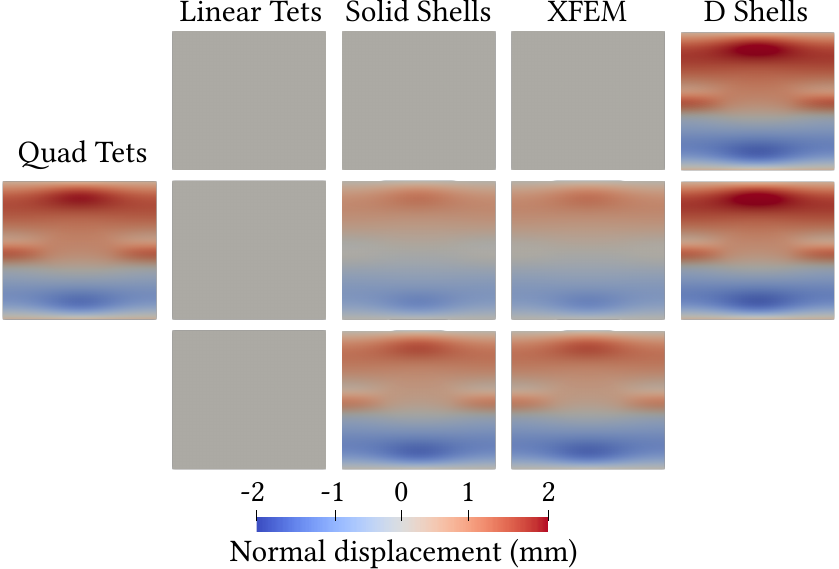}
    \caption{Normal displacement plots for uni-axial loading.  Almost all models produce out-of-plane buckling given sufficient resolution, whereas linear tetrahedra completely fail to capture this behavior.}
    \label{fig:shellCompBending}
\end{figure}
To investigate the impact of material interfaces, we use a disc-shaped region with stiffer material at the center of the plate. See also Fig. \ref{fig:solidShellExperiments}. We solve these problems numerically using different types of elements, different mesh resolutions, and plates of different thickness. We then compare the elastic energy and per-vertex displacement norms at equilibrium. 

The results for the bending test are summarized in Tab. \ref{tab:timePerformance}. Quadratic tetrahedra perform best and converge very rapidly under refinement. Because differences are very small, we only list energy and compute time for the coarsest discretization and use them as reference values.
It can further be noted that  \Changes{the discrete shell model \cite{gingold2004discrete}} performs extraordinarily well, offering the best compromise between accuracy and computation time.
The picture is very different for linear tetrahedra, which perform poorly and worst by far. Solid shell elements fare much more favorably in comparison. Especially the moderate resolution version seems to offer an interesting trade-off between accuracy and performance. Tab. \ref{tab:timePerformance} also shows that the performance of solid shell elements and linear tetrahedra degrades as the shell thickness decreases from $0.6mm$ to $0.15mm$. While these observations imply that solid shells are not competitive for extremely thin shells, we are primarily interested in problems with thickness values above $1mm$. 

The results for uni-axial stretching underscore these initial impressions.
As illustrated in Fig. \ref{fig:solidShellExperiments}, the stiff disk prevents homogeneous stetching which, in combination with a nonzero Poisson ratio, leads to spatially varying compressive stresses in the orthogonal direction that induce buckling.
Fig. \ref{fig:shellCompBending} shows results for this experiment in the form of normal displacement plots. Using again quadratic tetrahedra as ground truth, it can be seen from the plots that results for the discrete shell model are quite close to the reference solution, although convergence under refinement is somewhat unclear. Solid shells do not produce satisfying results for the lowest resolution, but converge to qualitatively acceptable solutions under refinement. In stark contrast, linear tetrahedra fail to produce out of plane motion altogether, which we attribute to the well-documented phenomenon of volumetric locking. 

\begin{table}[t]
	\centering
	\caption{Comparison between different shell models on a cylindrical bending example. }
        \Changes{
        \begin{tabular}{p{2.0cm}p{1.1cm}p{1.1cm}p{1.1cm}p{1.0cm}}
		\toprule
		Method & DOFs & Energy (0.6mm) & Energy (0.15mm)  & Timing [s]\\
		\midrule
            Quadratic Tets & 73764 & 0.0596 & 0.0009  & 10.86\\
            \midrule
		Linear Tets & 12381 & 0.3542 & 0.0483  &  0.74 \\
		          &47914 & 0.2072 & 0.0146  &  3.61 \\
		           & 244285 & 0.0951 & 0.0070  &  35.36 \\
            \midrule
		Solid Shells & 11884 & 0.1215 & 0.0164  &  0.70 \\
		           & 46346 & 0.0751 & 0.0048  & 3.50 \\
		           & 183022 & 0.0635 & 0.0032  &  17.96 \\
            \midrule
		XFEM       & 12126   & 0.1210 & 0.0161  &  0.83 \\
		            & 46864 & 0.0752 & 0.0048 &  3.90 \\
		           & 184438 & 0.0635 & 0.0032  &  19.84 \\
            \midrule
            Discrete Shells & 5942 & 0.0598 & 0.0009  &  0.56 \\
		              & 23173 & 0.0597 & 0.0009  &  2.60 \\
		\bottomrule
	\end{tabular}
 }
	\label{tab:timePerformance}
\end{table}

\paragraph{Discussion}
Our analysis shows that the discrete shell model offers very good accuracy already for comparatively coarse discretizations---on par with quadratic tetrahedra, but at a much lower computational costs. While this makes them an ideal choice for static bi-material distributions, we were unable to find a robust formulation for moving interfaces with conforming meshes. Our experiments with different options invariably led to numerical instabilities: directly moving interface nodes along with the stripe pattern produces ill-shaped elements and diverging derivatives. Moving non-interface nodes using, e.g., a Laplacian regularizer \cite{Perez:2017:CDA,Montes20Skintight} delays these problems without solving them. While remeshing can avoid instabilities in principle, the corresponding discontinuities would prevent gradient-based optimizers from converging. 
These observations are in line with expectations as they are among the primary reasons for using XFEM instead of conforming discretizations in inverse problems with material interfaces \cite{moes2003computational,zehnder2017metasilicone,Kaufmann09Enrichment}. While discrete shells and XFEM would appear to be a winning combination, we could not find a way to reconcile these disparate  concepts: extended finite elements rely on element geometry and strain being defined in terms of basis functions and their derivatives, but the \Changes{curvature computation of discrete shell elements} does not fit this framework. 
Ultimately, solid shells with XFEM emerge as the best trade-off for our problem setting. Unlike linear tetrahedra, they provide acceptable accuracy already for moderate resolution. Compared to quadratic tetrahedra with XFEM, solid shells are considerably simpler and, for the level of accuracy and mesh resolution that we require, the more efficient choice. 

\section{Differentiable Stripe Patterns}
\label{sec:DSS}
Inverse material design based on stripe patterns leads to the central challenge of determining the change in stripes induced by a given change in input vector field $\bz$. Since we only care about the direction of per-vertex vectors $\bz_i$ and not their magnitude, we parameterize the input vector field as $\bz=\bz(\bp)$ where $\bp=(p_1,\ldots,p_n)$ is the vector of design parameters with per-vertex angles $p_i$.
The relation between design parameters and output material distribution is then given implicitly through the generalized eigenvalue problem (\ref{eq:generalizedEVP}). However, since eigenvalues for this problem have geometric multiplicity two, the corresponding eigenvectors are not unique. 
We first show that eigenvector derivatives are not well defined before we describe our solution to this problem.


\subsection{Eigenvector Derivatives}
\label{sec:eigenDerivatives}

We start by expressing optimization problem (\ref{eq:EVPOptProblem}) through its Lagrangian
\begin{equation}
\Changes{
    \mathcal{L}(\eigvec,\eigval)=\frac12\eigvec^T\bA\eigvec -\frac\eigval2\left(  \eigvec^T\bB\eigvec - 1\right)\ .}
\end{equation}
Any solution $\bs=(\eigvec,\eigval)$ to this problem must satisfy the first-order optimality conditions $\bg(\eigvec,\eigval):=\nabla\mathcal{L}=\mathbf{0}$ where
\begin{align}
    \bg_v = (\bA -\eigval \bB) \eigvec  \ , \\
    \Changes{
    \bg_\lambda = -\frac 12 \left( \eigvec^T\bB\eigvec -1\right) \ . }
\end{align}
Requiring  that these conditions be satisfied for all admissible parameter changes,
\begin{align}
    \frac{d \bg}{d\bp} = \frac{\partial \bg}{\partial \bp} + \frac{\partial \bg}{\partial \bs}\frac{\partial \bs}{\partial \bp}=\mathbf{0} \ ,
\end{align}
leads to the saddle-point system
\begin{equation}
    \begin{bmatrix}
    \bA-\eigval \bB & -\bB\eigvec \\
    -(\bB\eigvec)^T & 0 
    \end{bmatrix}
    \begin{bmatrix}
    \frac{\partial \eigvec}{\partial \bp} \\
    \frac{\partial \eigval}{\partial \bp}
    \end{bmatrix}
    =
    \begin{bmatrix}
    -\frac{\partial \bg_\eigvec}{\partial \bp} \\
    0
    \end{bmatrix}
    \ .
\end{equation}
Since $\eigval$ has geometric multiplicity two, the system is singular and, consequently, the derivative of $\eigvec$ is not well-defined. This is easily verified by observing that the vector $(\eigvec_\perp,0)$ is in the nullspace of $\dgds$, where $\eigvec_\perp$ is the second eigenvector for $\eigval$ with $\eigvec^T\eigvec_\perp=0$.  

To resolve eigenvector ambiguity, we recall from Sec. \ref{sec:stripePatterns} that vectors in the two-dimensional eigenspace correspond to synchronous rotations of per-vertex phase vectors. Using this insight, we can define unique eigenvectors by specifying the phase for an arbitrary vertex $k$. To this end, we simply set $\bv_k=(a_k,0)$ which eliminates one degree of freedom. This leads to a modified optimization problem with an extra constraint on $b_k$,
\begin{equation}
\label{eq:LagrangianRegularized} 
    \mathcal{L(\bp,\eigvec,\lambda,\mu)} = \frac{1}{2} \bv^T\bA\bv - \frac{1}{2}\lambda (\bv^T\bB\bv-1) - \mu b_k,
\end{equation}
whose corresponding first-order optimality conditions are
\begin{align}
    \bg_\eigvec & = \bA\eigvec - \lambda\bB\eigvec - \mu \be_{b_k} = 0  \label{eq:optCond2_1} \\
    \bg_\eigval &= -\frac{1}{2}(\bv^T\bB\bv-1) = 0 \label{eq:optCond2_2}\\
    \bg_\mu &= -b_k = 0 \label{eq:optCond2_3} \ .
\end{align}
A vector $\bv$ that satisfies these optimality conditions is also an eigenvector of $\bA-\lambda\bB$, since we can always find a vector in the eigenplane that satisfies the condition $b_k=0$. Consequently, $\mu$ is always equal to $0$. Requiring Eqs (\ref{eq:optCond2_1}---\ref{eq:optCond2_3}) to be satisfied for any change in parameters yields the modified saddle-point system
\begin{equation}
\label{eq:KKT2}
    \begin{bmatrix}
    \bA-\eigval \bB & -\bB\eigvec & \be_{b_k} \\
    -(\bB\eigvec)^T & 0 & 0 \\
    \be_{b_k}^T & 0 & 0
    \end{bmatrix}
    \begin{bmatrix}
    \frac{\partial \eigvec}{\partial \bp} \\
    \frac{\partial \eigval}{\partial \bp} \\
    0
    \end{bmatrix}
    =
    \begin{bmatrix}
    -\frac{\partial \bg_\eigvec}{\partial \bp} \\
    0 \\
    0
    \end{bmatrix}
    \ ,
\end{equation}
where $\be_{b_k}$ is the unit vector corresponding to the index $b_k$. Intuitively, there exists no vector from the eigenplane that is both orthogonal to $\eigvec$ and $\be_{b_k}$. The above system is therefore non-singular and can be solved for the unknown eigenvector derivatives $\frac{\partial \eigvec}{\partial \bp}$.

\paragraph{Eigenplane Parameterization}
It should be noted that, while constraining a given phase variable leads to unique eigenvectors, any specific choice of $k$ and $b_k$ will introduce bias---another vector from the eigenplane might be better suited for decreasing the design objective. To avoid arbitrary choices,  we introduce an additional parameter $\theta$ that represents the current eigenvector as a rotation in the eigenplane relative to the unique reference eigenvector defined through the choice of $k$ and $b_k$.

\subsection{Regularization and Parametrization}

\paragraph{Vanishing Phase Values}
The idea of stripe patterns is to express phase as the argument of complex number since this representation can offer global continuity. Once eigenvectors are computed, per-vertex phases (complex numbers) have to be normalized to retrieve their argument (i.e., the real-numbered angle value). However, as phase vector tend towards zero, this operation becomes numerically unstable and, eventually, diverges. This is typically not much of a concern for the forward stripe pattern problem, since even numerically infinitesimal vectors can still be normalized. However, the sensitivities of angles diverge as per-vertex phases approach zero, which is very problematic for the inverse problem.
Our solution to this problem is to simply disallow phase vectors from becoming arbitrarily small. We achieve this goal by imposing smoothly-clamped logarithmic barrier functions \cite{li2020incremental} on per-vertex phases as,
\begin{equation}
    R_\mathrm{sing}(d,-\hat{d})= 
    \begin{cases}
    -(d-\hat{d})^2 \ln (\frac{d}{\hat{d}}) & 0<d\leq\hat{d} \\
    0 & \text{otherwise} \ ,
    \end{cases}
\end{equation}
where $d=|\eigvec_i|$ and $\hat{d}=0.1$ is a cut-off value. 
By preventing zero phases through a design objective, we ensure that eigenvectors still have the same meaning as before---they represent ideal approximations to potentially non-integrable vector fields. However, the design parameters are now strongly repelled from values that lead to eigenvectors with zero per-vertex phases.

An interesting question in this context is \textit{what do we lose when preventing zero per-vertex phases}? As described by Kn{\"o}ppel et al. \shortcite{Knoeppel15Stripe}, scaling per-vertex phases down to zero permits the introduction of singularities in the phase field that would otherwise lead to non-integrability. Our experiments indicate that the expressiveness of stripe patterns generated without such non-integrable components does not suffer. Indeed, while almost-zero phase vectors can be normalized without problems, they often lead to randomly oriented phases with significant curl. These high-curl regions translate into quasi-amorphous material distributions with isolated patches rather than clean stripes. We argue that such phase fields are neither useful nor desirable from a material design perspective.

\paragraph{Smoothness}

While the regularizer introduced above prevents singularities in the phase field, the input vector field can still exhibit high-frequency components that translate into artifacts in the stripe patterns as shown in Fig. \ref{fig:smoothRegularizer}.
To avoid such artifacts, we introduce an additional smoothness regularizer based on the formulation by Crane et al. \shortcite{Crane:2010:TCD},
\begin{equation*}
    R_\mathrm{sm}(\bp) = \sum_{i,j\in \mathcal{E}} w_{ij}
    \left([\mathrm{cos}(p_i)-\mathrm{cos}(p_j)]^2 + [\mathrm{sin}(p_i)-\mathrm{sin}(p_j)]^2 
    \right) \ , \\
\end{equation*}
where $\mathcal{E}$ denotes the edge index set, $w_{ij}$ is  cotangent weight, and $p_i$, $p_j$ are design parameters (i.e., angles) expressed in a common reference frame. 
As can be seen from Fig. \ref{fig:smoothRegularizer} (\textit{right}), this simple regularizer leads to substantially cleaner stripe patterns. It furthermore reduces singularities even before the log barrier penalty becomes active. 

\begin{figure}[h]
    \centering
    \includegraphics[scale=1.0]{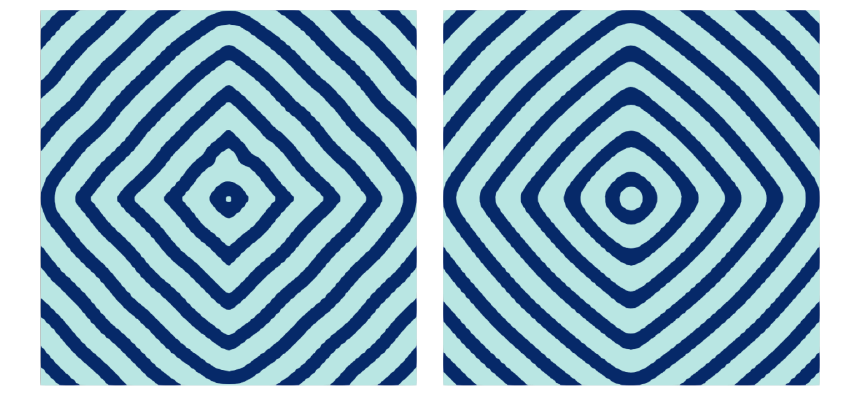}
    \caption{Smoothness regularizer. By promoting smooth vector fields during optimization, this regularizer removes high-frequency noise (\textit{left}), resulting in clean and aesthetically pleasing stripe patterns (\textit{right}).
    }
    \label{fig:smoothRegularizer}
\end{figure}


\begin{figure*}[h]
    \centering
    \includegraphics[width=\textwidth]{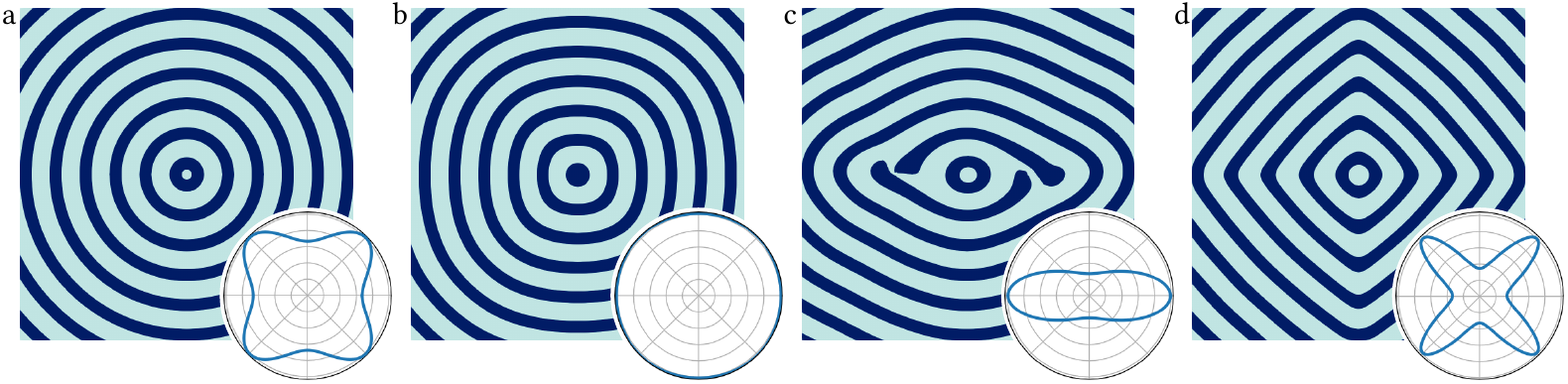}
    \caption{Controlling Macromechanical Properties. Starting from an initial concentric pattern (\textit{a}), we optimize for patterns that lead to isotropic (\textit{b}), orthotropic (\textit{c}), and tetragonal (\textit{d}) macromechanical behavior. }
    \label{fig:FabricModulation}
\end{figure*}

\subsection{Design Objectives}
Our differentiable formulation for stripe patterns can be used for solving a large array of design problems, see Sec. \ref{sec:results}. In the following, we summarize the objective functions that we use for our examples.

\paragraph{Target Deformation}
A basic task is to find stripe patterns such that the design best approximates a given target shape $\tilde{\bx}$ in equilibrium. In its simplest form, the corresponding objective function is
\begin{equation}
\label{eq:targetDeformation}
    T_\mathrm{match} = |\bx(\bp)-\tilde{\bx}|^2 \ .
\end{equation}

\paragraph{Macromechanical Properties}
Stripe patterns are ideal for creating materials with high stiffness contrast, e.g., by modulating the macromechanical properties of a soft (fabric) substrate with stripe patterns made from stiffer material. 
In this context, designers are typically interested in averaged, high-level behavior of the reinforced material, rather than in local deformations. In order to find stripe patterns that best approximate a given macromechanical behavior, we define an objective function that penalizes deviations between actual and target directional stiffness as 
\begin{equation}
    T_\mathrm{Mat} = \sum_i |k(\theta_i,\bx(\bp))-\hat{k}_i|^2 \ ,
\end{equation}
where $\theta_i\in[0,\pi]$ are sample locations in polar coordinates with corresponding directional stiffness targets $\Changes{ \hat{k_i}}$ given as generalized Young's moduli. To compute the directional stiffness for given design parameters, we simulate a unit-patch of the corresponding structure subject to periodic boundary conditions that enforce a given uniaxial target deformation. We obtain the corresponding stiffness $k(\theta_i,\bx(\bp))$ by evaluating the stress on the boundary of the unit cell and dividing by the strain magnitude as described by Schumacher et al. \shortcite{Schumacher18Mechanical}.

\paragraph{Generalized Stiffness}
Another basic design goal that we use in our examples is to achieve a desired stiffness with respect to given loads. When defining generalized stiffness as displacement norm divided by applied force magnitude, this objective becomes a special case of Eq.  (\ref{eq:targetDeformation}) . 

\subsection{Optimization Algorithm}
Our formulation allows us to compute derivatives of objective functions with respect to the design parameters. These gradients can then be used to drive first-order descent or quasi-Newton methods for minimization. We experimented with steepest descent and L-BFGS, but ultimately found the Globally-Convergent Method of Moving Asymptotes (GC-MMA) \cite{svanberg1995globally} to be most efficient for our problem setting.
\Changes{We express the objective gradient with respect to the parameters as
\begin{equation}
\label{eq:dTdp}
\begin{split}
    \frac{dT}{d\bp} = \frac{\partial T}{\partial \bp} + \left(\frac{\partial \boldsymbol{\phi}}{\partial \bv} \frac{\partial \bv}{\partial \bp}\right)^T \left(\frac{\partial T}{\partial \boldsymbol{\phi}} + \frac{\partial \bx}{\partial \boldsymbol{\phi}}^T \frac{\partial T }{\partial \bx}\right) \ ,
    \end{split}
\end{equation}}
\Changes{ where $\frac{\partial \phi}{\partial \bv}$ is the derivative of the level set with respect to the eigenvector (Eq. \ref{eq:levelset}), $\frac{\partial \bv}{\partial \bp}$ denotes the sensitivity of the eigenvector with respect to the parameters as illustrated in section \ref{sec:eigenDerivatives} and $\frac{\partial \bx}{\partial \boldsymbol{\phi}}$ is the sensitivity of the nodal positions with respect to the level set. Instead of fully evaluating both sensitivity matrices, we use the adjoint method for both terms sequentially, requiring just two linear solves per gradient evaluation.}

\Changes{\paragraph{Simulation Sensitivity}
The map of the nodal positions as a function of the level set $\bx(\boldsymbol{\phi})$ is given by the force equilibrium constraint of the forward simulation, $\boldsymbol{f}=-\frac{dU}{d\bx}=0$. As a result, changes in $\boldsymbol{\phi}$ induce changes in the equilibrium configuration $\bx$. Since $\frac{d\boldsymbol{f}}{d\boldsymbol{\phi}}=0$ for every equilibrium configuration, we have}
\begin{equation}
\Changes{
    \frac{d\boldsymbol{f}}{d\boldsymbol{\phi}}=\frac{\partial\boldsymbol{f}}{\partial\boldsymbol{\phi}}+\frac{\partial \bx}{\partial \boldsymbol{\phi}}^T\frac{\partial\boldsymbol{f}}{\partial\bx}=0\ ,
    }
\end{equation}
\Changes{from which $\frac{\partial \bx}{\partial \boldsymbol{\phi}}$ is extracted.}

\paragraph{Evaluating Candidate Parameters}
Whenever a candidate parameter update is evaluated during the optimization process, we must recompute eigenvectors at the new parameter location. 
To this end, we compute a pair of orthogonal eigenvectors $(\eigvec_1,\eigvec_2)$ as a basis for the eigenplane with minimal eigenvalue. We obtain the reference eigenvector by finding $\theta$ that satisfies the condition $b_k=0$ in the equation
\begin{equation}
    \eigvec =\eigvec_1 \mathrm{cos}(\theta) + \eigvec_2 \mathrm{sin}(\theta) \ .
\end{equation}
The current eigenvector is computed by rotating the reference vector by the new angle $\theta +\Delta \theta$ as given by the search direction. Having computed the new eigenvector in this way, the material interfaces are updated accordingly, and new equilibrium positions are computed using forward simulation.

\section{Results}
\label{sec:results}

We evaluate our method on a range of examples that demonstrate its potential to solve general inverse design problems for elastic surfaces structured with bi-material distributions. Statistics for all experiments
are listed in Tab. \ref{tab:optPerformance}.

\paragraph{Controlling Macromechanical Properties}
In our first example, we explore stripe patterns as a design space for modulating the mechanical properties of an isotropic base material. To this end, we initialize our method with a radial vector field that generates concentric stripes on a square patch. We impose periodic boundary conditions and use homogenization to compute directional stiffness profiles based on the formulation by Schumacher et al. \shortcite{Schumacher18Mechanical}.
As can bee seen from Fig. \ref{fig:FabricModulation}(\textit{a}), despite its apparent symmetry, the tiling of this pattern is not rotationally symmetric and its stiffness is thus not isotropic. Given an isotropic stiffness profile as target, however, our method finds a concentric but slightly more rectangular pattern that leads to the desired behavior (Fig. \ref{fig:FabricModulation}(\textit{b})). As shown in Fig. \ref{fig:FabricModulation}(\textit{c}) and (\textit{d}), our method is likewise able to find modified patterns that yield orthotropic and tetragonal behavior, respectively.

\paragraph{Variable Stiffness Materials}
In a second example, we use our method to design variable stiffness materials. We consider a rectangular patch clamped at two opposite boundaries as shown in Fig. \ref{fig:GradedStiffness}. As design objective, we ask that displacements should vary linearly when applying a constant horizontal force density such as to encourage a stiff to soft distribution from the top to the bottom boundary. We start with a initial material distribution of vertical stripes, for which the stiffness---measured as applied force magnitude divided by resulting displacement norm---is the same for both top and bottom boundaries. 
As can be seen from Fig. \ref{fig:GradedStiffness} (\textit{b, right}), our method produces a pattern that yields the desired stiffness gradient. Visually, stripes have changed direction on the stiffer boundary (\textit{top}) into an almost horizontal arrangement, whereas they remain largely vertical on the softer end (\textit{bottom}). This non-trivial transition is enabled through a sequence of turning and branching points that, despite rather substantial changes, lead to an overall smooth and continuous pattern. 
\begin{figure*}[h]
    \centering
    \includegraphics[width=\textwidth]{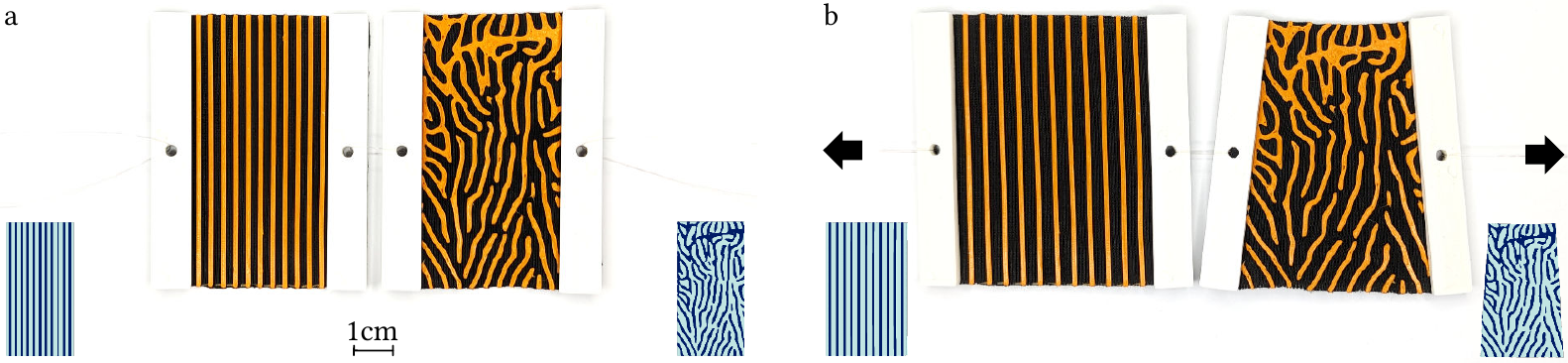}
    \caption{Variable Stiffness Design. Comparison between an initial design with constant stiffness (\textit{a, left}) and our optimized design with linearly varying stiffness (\textit{a, right}). When applying the same force density, our optimized design (\textit{b, right}) exhibits the desired stiffness gradient whereas the initial design shows constant stiffness as expected.}
    \label{fig:GradedStiffness}
\end{figure*}

\paragraph{Compliant Gripper}
While the previous examples have focused on the control of in-plane mechanical properties, Differentiable Stripe Patterns can likewise be used to drive large out-of-plane deformations upon actuation.
To explore this application, we study an example inspired by Kirigami thin-shell grippers \cite{Yang21Grasping}.
Starting from the parallel stripes design shown in Fig. \ref{fig:teaser} (\textit{a}), we aim to optimize the pattern such that, when pulled along its horizontal axis, the wings of the sheet fold such as to minimize the distance between their tips.
Actuating the initial design with parallel stripes does not result in sufficient lateral deflection (Fig. \ref{fig:teaser} (\textit{c})). For our optimized design, however, the same actuation produces the desired large out-of-plane motion that is able to lift a small 3D-printed model; see Fig. \ref{fig:teaser}(\textit{d}) and the accompanying video.

\paragraph{Soft Pneumatic Actuator}
Our method can also be combined with other means of actuation to produce large controlled deformations. 
We demonstrate such a use case by designing a pneumatic actuator---a basic building block for soft robotics applications.
The actuator is made from a rectangular piece of textile structured with stiff stripe patterns. The reinforced design is connected along its long boundary and sealed at the far ends. We simulate inflation by applying constant pressure forces \cite{Montes20Skintight} and enforce periodic boundary conditions on both displacement and stripe patterns. Our design objective is to reach a target shape corresponding to a constant-curvature bend of $180^{\circ}$. 
As shown in Fig. \ref{fig:PneumaticActuator}, our method produces a smooth stripe pattern that closely approximates the target shape in simulation. One can distinguish two main mechanisms that enable this behavior: ribs running along the radial direction such as to allow axial stretch on the top surface while preventing an increase in radius; and a concentration of stiff material on the bottom surface that leads to differential axial stiffness and, consequently, a preferred bending direction. Our physical prototype confirms the feasibility of this design and achieves the targeted change in end-effector orientation, albeit with a somewhat larger lateral contraction.
\begin{figure*}[h]
    \centering
    \includegraphics[width=\textwidth]{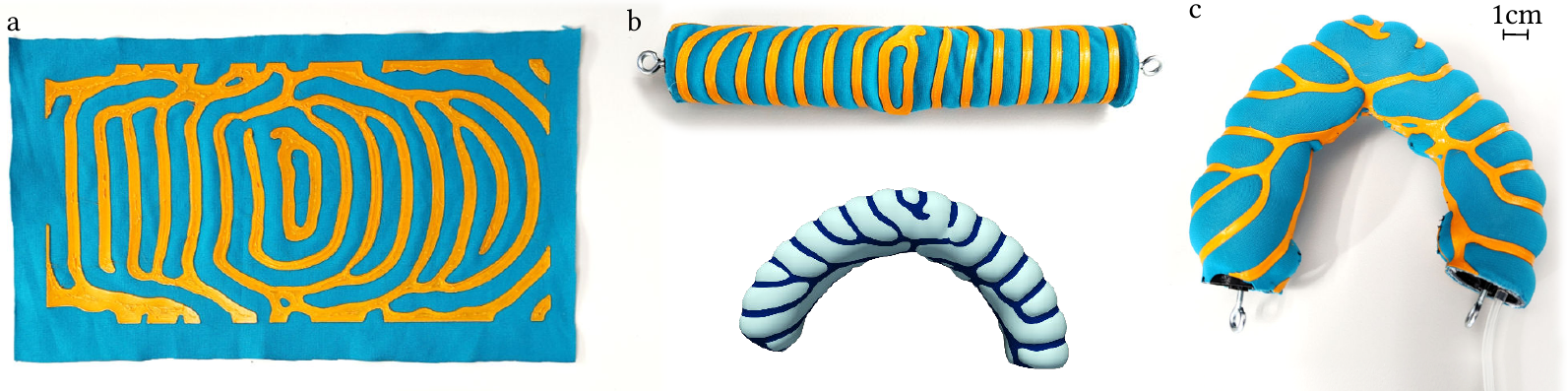}
    \caption{Soft Actuator. Using a stretchable textile as substrate, we optimize for stripe-shaped reinforcements (\textit{a}) such that, once sewn and sealed (\textit{b, top}), the soft actuator deforms into a desired target shape upon pressurization (\textit{c}). The prediction from our simulation model (\Changes{\textit{b, bottom}}) is in good agreement with the physical actuator.}
    \label{fig:PneumaticActuator}
\end{figure*}

\paragraph{Optimization of Insoles}
Stripe patterns are ideal for designing materials with targeted directional deformations, and all examples shown so far relied on this unique capability. To analyze the robustness of our method with respect to design goals outside this space, we investigate an example that does not inherently require material anisotropy.
To this end, we consider the design of a compliant shoe sole that exhibits low stiffness to normal loading in the heel and forefoot regions but offers higher stiffness in the midfoot region for increased stability. As can be seen in Fig. \ref{fig:ShoeInsole}, our method finds a pattern layout that achieves these stiffness goals in simulation by concentrating stiff material in the midfoot region. To evaluate this result quantitatively, the initial design shows average displacements in the normal direction of $-3.66mm$ for heel and forefoot regions and $-6.31mm$ for the midfoot. After optimization, these values have changed to $-4.67mm$ and $-2.52mm$, respectively, indicating that the desired stiffness distribution has been achieved.
It can further be observed that the quality of the stripes deteriorates as the optimization tries to increase the amount of stiff materials. Our smoothness and singularity regularizers put a bound to this trend and largely succeed in maintaining stripe integrity.  

\begin{figure*}[h]
    \centering
    \includegraphics[width=\textwidth]{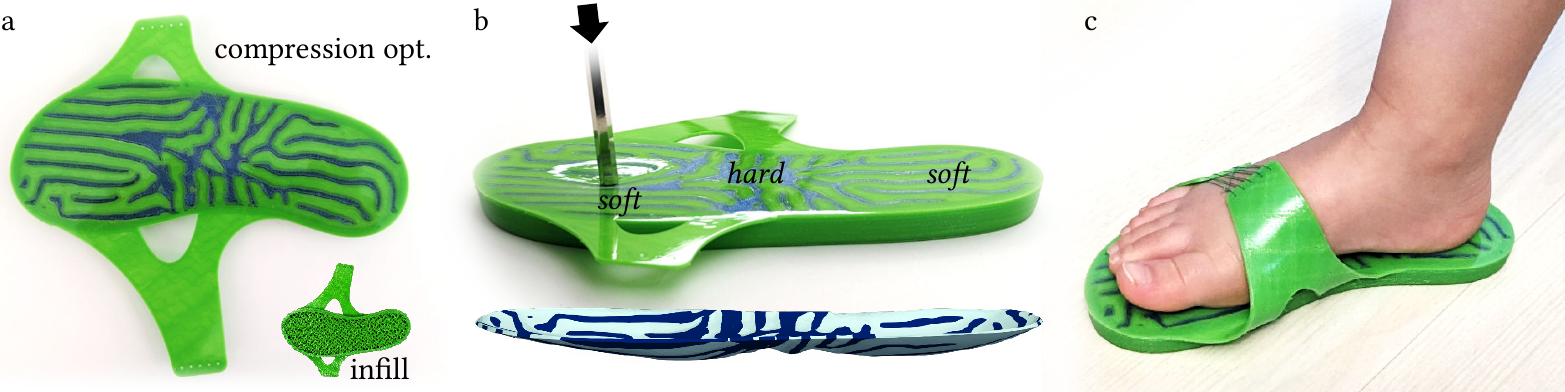}
    \caption{Heterogeneous stiffness optimization. We optimize for stripe patterns such as to produce a shoe insole with a soft response to vertical loading in the heel and forefoot region while offering larger stiffness in the midfoot region.  
    A bi-material stripe pattern with soft TPU (\textit{green}) and stiff PET (\textit{blue}) is placed on top of a soft infill (\textit{a}). The stripe pattern is optimized to yield a soft response to vertical loading in the heel and forefoot region while offering larger stiffness in the midfoot region (\textit{b}) such as to increase comfort when worn (\textit{c}).}
    \label{fig:ShoeInsole}
\end{figure*}

\begin{figure}[h]
    \centering
    \includegraphics[scale=1.0]{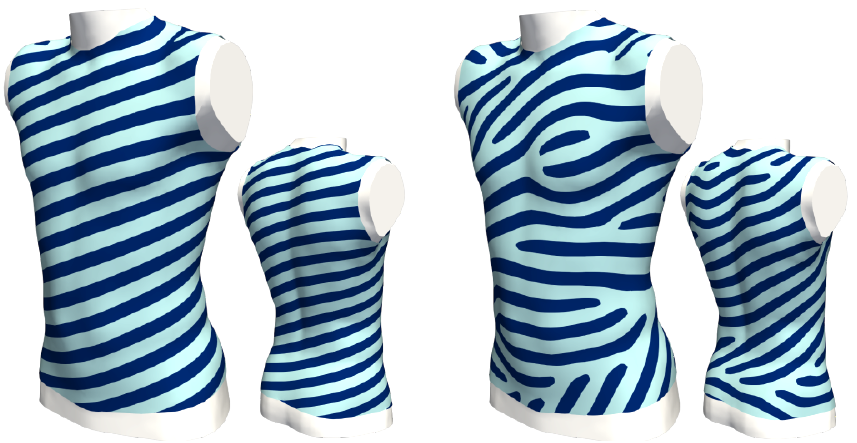}
    \caption{Functional Elastic Shirt. Using a stretchable textile as substrate, we optimize for stripe-shaped reinforcements such as to minimize the strain energy of the garment. \textit{Left}: initial design with parallel stripes in front and back view. Our optimized design (\textit{right}) yields a significant decrease in energy. }
    \label{fig:StripeShirt}
\end{figure}

\paragraph{Garment Design}
Fitted sportswear and medical garments rely on the ability to locally control stretch and stiffness.
To investigate the potential of Differentiable Stripe Patterns in this context, we study the problem of designing an elastic sports shirt structured with stripe-shaped reinforcements. The shirt is fitted onto a torso model and, compared to its rest shape, must increase in area by about 10\% to conform to the body. Stretch is therefore inevitable and we aim to route stripes such as to minimize the elastic energy of the shirt.  
The initial  design with parallel stripes shown in Fig. \ref{fig:StripeShirt}(\textit{left}) exhibits a high strain energy value since stiff reinforcements experience high stress with such a layout. Our optimization method finds a stripe pattern that decreases the energy by more than 25\% with stripes that  branch and meander such as to avoid long tension lines and alignment with principal stress directions. See Fig. \ref{fig:StripeShirt}(\textit{right}).

\paragraph{Structural Optimization of Thin Shells}
While we primarily designed our method to operate on bi-material distributions, Differentiable Stripe Patterns can also be used to generate geometric modifications for single material designs. We investigate this \textit{geometry mode} on a structural optimization problem where we seek to increase the stiffness for a simple vase model subject to vertical loading as shown in Fig. \ref{fig:ThinShells_graph}. Instead of defining a bi-material distribution, we use stripe patterns with sinusoidal cross sections to generate normal displacements. 
Structuring surfaces with so called \textit{beads} and \textit{groves} to increase stiffness is a common strategy---also referred to as topography optimization---in engineering design. 
As shown Fig. \ref{fig:ThinShells_graph} (\textit{3}---\textit{4}), the optimized design found by our method yields a structure with significantly improved axial stiffness that shows no sign of buckling even as we increase load by a factor of two (Fig. \ref{fig:ThinShells_graph}). As a visual interpretation, adding stripe-shaped normal displacements means that quasi-isometric bending modes for the plain design would induce substantial in-plane deformation in the optimized design. Thanks to its fully-differentiable nature, our method is able to exploit this effect, leading to structurally efficient and aesthetically pleasing patterns.

\Changes{
\paragraph{Choice of Initial Guess}
As with any nonlinear, nonconvex optimization problem, there can be many local minima in the solution landscape. As a result, the pattern found depends on the choice of initial guess. We explore the effect of different initial guesses on an example that aims to modulate the directional stiffness of a reinforced fabric into a tetragonal profile. Starting from unidirectional and radial vector fields, we optimize for the same target and compare their results.
As can be seen in Fig. \ref{fig:initialGuess}, the symmetry of the concentric circles allows the pattern to comfortably reach the target with very few changes. In contrast, the unidirectional pattern, which exhibits an extreme initial stiffness profile, is forced to break its symmetry to better fit the target, without achieving the same level of success as the concentric circles.
}

\begin{figure}[h]
    \centering
    \includegraphics[scale=1.0]{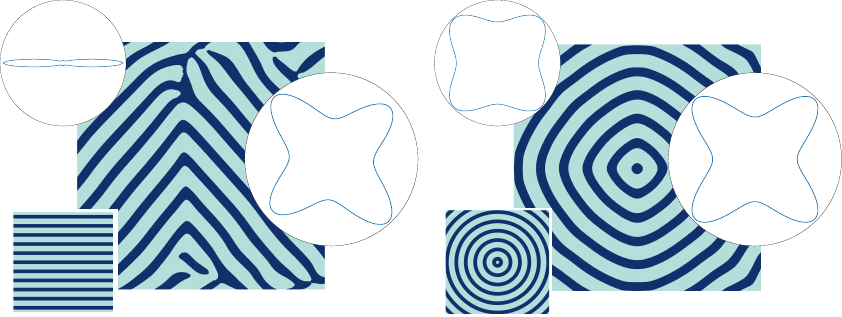}
    \caption{\Changes{
    Influence of initial guess on optimization results.
    We optimize a reinforced fabric to exhibit a tetragonal stiffness profile using a unidirectional (\textit{left}) and a radial (\textit{right}) vector field  as initial guess. Initial patterns and stiffness profiles are shown as bottom and top inset images, respectively. }}
    \label{fig:initialGuess}
\end{figure}

\begin{figure*}[h]
    \centering
    \includegraphics[width=\textwidth]{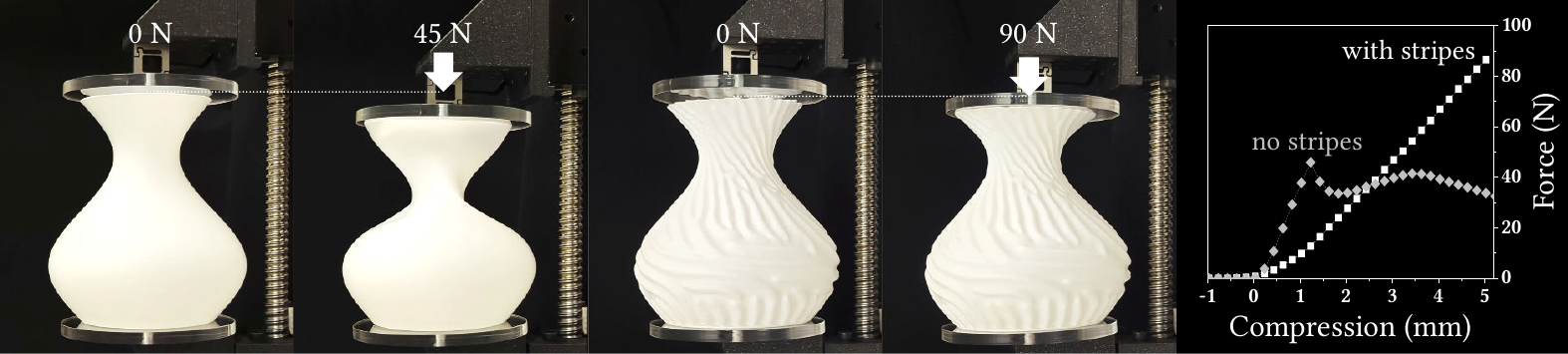}
    \caption{ Structural Optimization with Differentiable Stripe Patterns. From \textit{left} to \textit{right}: a 3D-printed thin-walled vase (\textit{1}) buckles under a vertical load of 45N (\textit{2}). To prevent this failure mode, we optimize for normal offsets in the form of stripe patterns such as to maximize the stiffness of the vase with respect to vertical loading (\textit{3}). Our optimized design deforms much less under vertical loading (\textit{4}) and shows no signs of buckling even for 90N (\textit{5}). 
    }
    \label{fig:ThinShells_graph}
\end{figure*}

\begin{table*}[t]
	\centering
	\caption{Summary of parameters and performance statistics for all our experiments.}
        \begin{tabular}
        {p{2.0cm}p{1.1cm}p{2.0cm}p{2.2cm}p{2.2cm}p{2.0cm}p{2.0cm}}
		\toprule
		Example  & \# dofs & \# iterations & Time per iter. [s] & Initial objective & Final objective & Figure\\
		\midrule

 Isotropic      	&		10481		&			55 				&		21.41				&		391.81			&	38.04 & Fig. \ref{fig:FabricModulation}(b)\\
Orthotropic		&		10481		&			310				&		21.22			&			7324.53			&	60.06 & Fig. \ref{fig:FabricModulation}(c) \\
Tetragonal		&		10481			&		267				&		21.50			&			1721.02		&		21.92 & Fig. \ref{fig:FabricModulation}(d) \\
Cloth			&		8138			&		66				&		193.97			&			776.17		&		642.59 & Fig. \ref{fig:StripeShirt}	\\					
Vase		&			14026			&		592			&			25.68			&			2344.09		&		1957.59 & Fig. \ref{fig:ThinShells_graph} \\
Actuator		&		11689			&		384				&		34.27			&			3574.64	&		901.40 & Fig. \ref{fig:PneumaticActuator} \\

Gripper			&		6718           &           14           &  389.31                 &         	1486.14     &	311.59   & Fig. \ref{fig:teaser}          \\
Var. Stiffness        &    11592 &  139 &	130.13	&	1031.89  &	12.49    & Fig. \ref{fig:GradedStiffness}   \\
Insole			&		6681			&		357			&			126.10			&			209.17	&		103.76  & Fig. \ref{fig:ShoeInsole}\\
		\bottomrule
	\end{tabular}
	\label{tab:optPerformance}
\end{table*}


\section{Conclusion}

We presented Differentiable Stripe Patterns, a computational approach that unlocks Stripe Patterns as a design space for structured surfaces.
In order to invert the stripe pattern pipeline  due to Kn{\"o}ppel et al.  \shortcite{Knoeppel15Stripe}, we have addressed several key challenges. 
First, we resolved ambiguities due to eigenvalue multiplicity by establishing a unique parameterization of the corresponding eigenplane, resulting in well-defined eigenvector derivatives. Second, to allow for accurate modeling of moving material interfaces, we proposed a combination of solid shells and extended finite elements. Finally, we introduced design space regularizers to avoid numerical singularities and improve stripe neatness. We combined these components with equilibrium state derivatives into an end-to-end differentiable pipeline that enables inverse design with high-level performance objectives. Our results indicate that stripe patterns are indeed a promising design space for bi-material surfaces, and that gradient-based optimization is an effective tool for exploring this space. There are, nevertheless, several limitations and corresponding opportunities for future improvements.

\subsection{Limitations}

We have assumed that our bi-material designs have material interfaces extending through the entire thickness. This assumption is a somewhat coarse approximation of our reinforced fabric examples, for which we added 3D-printed stripes \textit{on top} of a textile substrate.
While our results showed fairly good accuracy compared to real-world experiments, Jourdan et al. \shortcite{Jourdan2022Simulation} have recently demonstrated that a thin shell model can be adapted to reflect this non-symmetric patterning. We expect that similar modifications could likewise be effective for our solid shell model. Another alternative would be to use a bi-layer solid shell, with a homogeneous bottom layer and a structured top layer consisting of stiff and void material.

Our method is currently limited to a single linear cut per element. This is not much of a restriction in practice, since our focus is on low-frequency stripe patterns and we can simply choose mesh resolution accordingly. An exception are branches, which involve two cuts in a single element that form an inward dent. Our method currently ignores this non-convex feature and approximates it with its convex closure. Nevertheless, extending our method to handle multiple cuts per element using the hierarchical scheme by Zehnder et al. \shortcite{zehnder2017metasilicone} should pose no major problems.

We believe that the combination of solid shells and XFEM offers  unique advantages in terms of accuracy for computation time. In this work, we only explored constant thickness shells with two materials. However, solid shells could be useful for many other graphics applications where through-the-thickness deformation is an important effect, including simulation of skin, faces, and filled garments.

Finally, we have only started to explore the possibilities of stripe patterns for functional garment design. In the future, we would like to investigate the design of haptic garments that use stripe patterns to provide feedback on, e.g., posture \cite{Vechev2022Computational}. Extending passive stripes toward active control and sensing is likewise an exciting avenue for future work.

\begin{acks}
We are grateful to the anonymous reviewers for their valuable comments. This work was supported by the European Research Council (ERC) under the European Union’s Horizon 2020 research and innovation program (grant agreement No. 866480), and the Swiss National Science Foundation through SNF project grant 200021\_200644.
\end{acks}

\bibliographystyle{ACM-Reference-Format}
\bibliography{references}


\begin{thebibliography}{62}


\ifx \showCODEN    \undefined \def \showCODEN     #1{\unskip}     \fi
\ifx \showDOI      \undefined \def \showDOI       #1{#1}\fi
\ifx \showISBNx    \undefined \def \showISBNx     #1{\unskip}     \fi
\ifx \showISBNxiii \undefined \def \showISBNxiii  #1{\unskip}     \fi
\ifx \showISSN     \undefined \def \showISSN      #1{\unskip}     \fi
\ifx \showLCCN     \undefined \def \showLCCN      #1{\unskip}     \fi
\ifx \shownote     \undefined \def \shownote      #1{#1}          \fi
\ifx \showarticletitle \undefined \def \showarticletitle #1{#1}   \fi
\ifx \showURL      \undefined \def \showURL       {\relax}        \fi
\providecommand\bibfield[2]{#2}
\providecommand\bibinfo[2]{#2}
\providecommand\natexlab[1]{#1}
\providecommand\showeprint[2][]{arXiv:#2}

\bibitem[\protect\citeauthoryear{Bertoldi, Vitelli, Christensen, and
  Van~Hecke}{Bertoldi et~al\mbox{.}}{2017}]%
        {bertoldi2017flexible}
\bibfield{author}{\bibinfo{person}{Katia Bertoldi}, \bibinfo{person}{Vincenzo
  Vitelli}, \bibinfo{person}{Johan Christensen}, {and} \bibinfo{person}{Martin
  Van~Hecke}.} \bibinfo{year}{2017}\natexlab{}.
\newblock \showarticletitle{Flexible mechanical metamaterials}.
\newblock \bibinfo{journal}{\emph{Nature Reviews Materials}}
  \bibinfo{volume}{2}, \bibinfo{number}{11} (\bibinfo{year}{2017}),
  \bibinfo{pages}{1--11}.
\newblock


\bibitem[\protect\citeauthoryear{Bharaj, Levin, Tompkin, Fei, Pfister, Matusik,
  and Zheng}{Bharaj et~al\mbox{.}}{2015}]%
        {Bharaj15Computational}
\bibfield{author}{\bibinfo{person}{Gaurav Bharaj}, \bibinfo{person}{David I.~W.
  Levin}, \bibinfo{person}{James Tompkin}, \bibinfo{person}{Yun Fei},
  \bibinfo{person}{Hanspeter Pfister}, \bibinfo{person}{Wojciech Matusik},
  {and} \bibinfo{person}{Changxi Zheng}.} \bibinfo{year}{2015}\natexlab{}.
\newblock \showarticletitle{Computational Design of Metallophone Contact
  Sounds}.
\newblock \bibinfo{journal}{\emph{ACM Trans. Graph.}} \bibinfo{volume}{34},
  \bibinfo{number}{6}, Article \bibinfo{articleno}{223} (\bibinfo{date}{Oct.}
  \bibinfo{year}{2015}), \bibinfo{numpages}{13}~pages.
\newblock
\showISSN{0730-0301}
\urldef\tempurl%
\url{https://doi.org/10.1145/2816795.2818108}
\showDOI{\tempurl}


\bibitem[\protect\citeauthoryear{Bickel, B\"{a}cher, Otaduy, Lee, Pfister,
  Gross, and Matusik}{Bickel et~al\mbox{.}}{2010}]%
        {Bickel10Design}
\bibfield{author}{\bibinfo{person}{Bernd Bickel}, \bibinfo{person}{Moritz
  B\"{a}cher}, \bibinfo{person}{Miguel~A. Otaduy},
  \bibinfo{person}{Hyunho~Richard Lee}, \bibinfo{person}{Hanspeter Pfister},
  \bibinfo{person}{Markus Gross}, {and} \bibinfo{person}{Wojciech Matusik}.}
  \bibinfo{year}{2010}\natexlab{}.
\newblock \showarticletitle{Design and Fabrication of Materials with Desired
  Deformation Behavior}. In \bibinfo{booktitle}{\emph{ACM SIGGRAPH 2010
  Papers}} (Los Angeles, California) \emph{(\bibinfo{series}{SIGGRAPH ’10})}.
  \bibinfo{publisher}{Association for Computing Machinery},
  \bibinfo{address}{New York, NY, USA}, Article \bibinfo{articleno}{63},
  \bibinfo{numpages}{10}~pages.
\newblock
\showISBNx{9781450302104}
\urldef\tempurl%
\url{https://doi.org/10.1145/1833349.1778800}
\showDOI{\tempurl}


\bibitem[\protect\citeauthoryear{Boddeti, Tang, Maute, Rosen, and Dunn}{Boddeti
  et~al\mbox{.}}{2020}]%
        {Boddeti2020}
\bibfield{author}{\bibinfo{person}{Narasimha Boddeti}, \bibinfo{person}{Yunlong
  Tang}, \bibinfo{person}{Kurt Maute}, \bibinfo{person}{David~W. Rosen}, {and}
  \bibinfo{person}{Martin~L. Dunn}.} \bibinfo{year}{2020}\natexlab{}.
\newblock \showarticletitle{Optimal design and manufacture of variable
  stiffness laminated continuous fiber reinforced composites}.
\newblock \bibinfo{journal}{\emph{Scientific Reports}}  \bibinfo{volume}{10}
  (\bibinfo{date}{12} \bibinfo{year}{2020}).
\newblock
Issue 1.
\showISSN{20452322}
\urldef\tempurl%
\url{https://doi.org/10.1038/s41598-020-73333-4}
\showDOI{\tempurl}


\bibitem[\protect\citeauthoryear{Bridson, Marino, and Fedkiw}{Bridson
  et~al\mbox{.}}{2005}]%
        {Bridson02Simulation}
\bibfield{author}{\bibinfo{person}{R. Bridson}, \bibinfo{person}{S. Marino},
  {and} \bibinfo{person}{R. Fedkiw}.} \bibinfo{year}{2005}\natexlab{}.
\newblock \showarticletitle{Simulation of Clothing with Folds and Wrinkles}. In
  \bibinfo{booktitle}{\emph{ACM SIGGRAPH 2005 Courses}} (Los Angeles,
  California) \emph{(\bibinfo{series}{SIGGRAPH '05})}.
  \bibinfo{publisher}{Association for Computing Machinery},
  \bibinfo{address}{New York, NY, USA}, \bibinfo{pages}{3–es}.
\newblock
\showISBNx{9781450378338}
\urldef\tempurl%
\url{https://doi.org/10.1145/1198555.1198573}
\showDOI{\tempurl}


\bibitem[\protect\citeauthoryear{Chen, Levin, Matusik, and Kaufman}{Chen
  et~al\mbox{.}}{2017}]%
        {Chen17Dynamics}
\bibfield{author}{\bibinfo{person}{Desai Chen}, \bibinfo{person}{David I.~W.
  Levin}, \bibinfo{person}{Wojciech Matusik}, {and} \bibinfo{person}{Danny~M.
  Kaufman}.} \bibinfo{year}{2017}\natexlab{}.
\newblock \showarticletitle{Dynamics-Aware Numerical Coarsening for Fabrication
  Design}.
\newblock \bibinfo{journal}{\emph{ACM Trans. Graph.}} \bibinfo{volume}{36},
  \bibinfo{number}{4}, Article \bibinfo{articleno}{84} (\bibinfo{date}{jul}
  \bibinfo{year}{2017}), \bibinfo{numpages}{15}~pages.
\newblock
\showISSN{0730-0301}
\urldef\tempurl%
\url{https://doi.org/10.1145/3072959.3073669}
\showDOI{\tempurl}


\bibitem[\protect\citeauthoryear{Chen, Liu, Jacobson, and Levin}{Chen
  et~al\mbox{.}}{2020}]%
        {Chen20Spectral}
\bibfield{author}{\bibinfo{person}{Honglin Chen},
  \bibinfo{person}{Hsueh-TI~Derek Liu}, \bibinfo{person}{Alec Jacobson}, {and}
  \bibinfo{person}{David I.~W. Levin}.} \bibinfo{year}{2020}\natexlab{}.
\newblock \showarticletitle{Chordal Decomposition for Spectral Coarsening}.
\newblock \bibinfo{journal}{\emph{ACM Trans. Graph.}} \bibinfo{volume}{39},
  \bibinfo{number}{6}, Article \bibinfo{articleno}{265} (\bibinfo{date}{nov}
  \bibinfo{year}{2020}), \bibinfo{numpages}{16}~pages.
\newblock
\showISSN{0730-0301}
\urldef\tempurl%
\url{https://doi.org/10.1145/3414685.3417789}
\showDOI{\tempurl}


\bibitem[\protect\citeauthoryear{Chen, Sastry, van Rees, and Vouga}{Chen
  et~al\mbox{.}}{2018}]%
        {Chen18Physical}
\bibfield{author}{\bibinfo{person}{Hsiao-Yu Chen}, \bibinfo{person}{Arnav
  Sastry}, \bibinfo{person}{Wim~M. van Rees}, {and} \bibinfo{person}{Etienne
  Vouga}.} \bibinfo{year}{2018}\natexlab{}.
\newblock \showarticletitle{Physical Simulation of Environmentally Induced Thin
  Shell Deformation}.
\newblock \bibinfo{journal}{\emph{ACM Trans. Graph.}} \bibinfo{volume}{37},
  \bibinfo{number}{4}, Article \bibinfo{articleno}{146} (\bibinfo{date}{jul}
  \bibinfo{year}{2018}), \bibinfo{numpages}{13}~pages.
\newblock
\showISSN{0730-0301}
\urldef\tempurl%
\url{https://doi.org/10.1145/3197517.3201395}
\showDOI{\tempurl}


\bibitem[\protect\citeauthoryear{Chen, Panetta, Schnaubelt, and Pauly}{Chen
  et~al\mbox{.}}{2021}]%
        {Chen2021}
\bibfield{author}{\bibinfo{person}{Tian Chen}, \bibinfo{person}{Julian
  Panetta}, \bibinfo{person}{Max Schnaubelt}, {and} \bibinfo{person}{Mark
  Pauly}.} \bibinfo{year}{2021}\natexlab{}.
\newblock \showarticletitle{Bistable auxetic surface structures}.
\newblock \bibinfo{journal}{\emph{ACM Transactions on Graphics}}
  \bibinfo{volume}{40} (\bibinfo{date}{7} \bibinfo{year}{2021}).
\newblock
Issue 4.
\showISSN{15577368}
\urldef\tempurl%
\url{https://doi.org/10.1145/3450626.3459940}
\showDOI{\tempurl}


\bibitem[\protect\citeauthoryear{Chen, Levin, Kaufmann, Ascher, and Pai}{Chen
  et~al\mbox{.}}{2019}]%
        {Chen19EigenFit}
\bibfield{author}{\bibinfo{person}{Yu~Ju~(Edwin) Chen}, \bibinfo{person}{David
  I.~W. Levin}, \bibinfo{person}{Danny Kaufmann}, \bibinfo{person}{Uri Ascher},
  {and} \bibinfo{person}{Dinesh~K. Pai}.} \bibinfo{year}{2019}\natexlab{}.
\newblock \showarticletitle{EigenFit for Consistent Elastodynamic Simulation
  across Mesh Resolution}. In \bibinfo{booktitle}{\emph{Proceedings of the 18th
  Annual ACM SIGGRAPH/Eurographics Symposium on Computer Animation}} (Los
  Angeles, California) \emph{(\bibinfo{series}{SCA '19})}.
  \bibinfo{publisher}{Association for Computing Machinery},
  \bibinfo{address}{New York, NY, USA}, Article \bibinfo{articleno}{5},
  \bibinfo{numpages}{13}~pages.
\newblock
\showISBNx{9781450366779}
\urldef\tempurl%
\url{https://doi.org/10.1145/3309486.3340248}
\showDOI{\tempurl}


\bibitem[\protect\citeauthoryear{Crane, Desbrun, and Schr\"{o}der}{Crane
  et~al\mbox{.}}{2010}]%
        {Crane:2010:TCD}
\bibfield{author}{\bibinfo{person}{Keenan Crane}, \bibinfo{person}{Mathieu
  Desbrun}, {and} \bibinfo{person}{Peter Schr\"{o}der}.}
  \bibinfo{year}{2010}\natexlab{}.
\newblock \showarticletitle{Trivial Connections on Discrete Surfaces}.
\newblock \bibinfo{journal}{\emph{Computer Graphics Forum (SGP)}}
  \bibinfo{volume}{29}, \bibinfo{number}{5} (\bibinfo{year}{2010}),
  \bibinfo{pages}{1525--1533}.
\newblock


\bibitem[\protect\citeauthoryear{Garg, Grinspun, Wardetzky, and Zorin}{Garg
  et~al\mbox{.}}{2007}]%
        {Garg07Cubic}
\bibfield{author}{\bibinfo{person}{Akash Garg}, \bibinfo{person}{Eitan
  Grinspun}, \bibinfo{person}{Max Wardetzky}, {and} \bibinfo{person}{Denis
  Zorin}.} \bibinfo{year}{2007}\natexlab{}.
\newblock \showarticletitle{Cubic Shells}. In
  \bibinfo{booktitle}{\emph{Proceedings of the 2007 ACM SIGGRAPH/Eurographics
  Symposium on Computer Animation}} (San Diego, California)
  \emph{(\bibinfo{series}{SCA '07})}. \bibinfo{publisher}{Eurographics
  Association}, \bibinfo{address}{Goslar, DEU}, \bibinfo{pages}{91–98}.
\newblock
\showISBNx{9781595936240}


\bibitem[\protect\citeauthoryear{Gingold, Secord, Han, Grinspun, and
  Zorin}{Gingold et~al\mbox{.}}{2004}]%
        {gingold2004discrete}
\bibfield{author}{\bibinfo{person}{Yotam Gingold}, \bibinfo{person}{Adrian
  Secord}, \bibinfo{person}{Jefferson~Y Han}, \bibinfo{person}{Eitan Grinspun},
  {and} \bibinfo{person}{Denis Zorin}.} \bibinfo{year}{2004}\natexlab{}.
\newblock \showarticletitle{A discrete model for inelastic deformation of thin
  shells}. In \bibinfo{booktitle}{\emph{ACM SIGGRAPH/Eurographics symposium on
  computer animation}}. Citeseer.
\newblock


\bibitem[\protect\citeauthoryear{Gongora, Snapp, Pang, Tiano, Reyes, Whiting,
  Lawton, Morgan, and Brown}{Gongora et~al\mbox{.}}{2022}]%
        {Gongora22Designing}
\bibfield{author}{\bibinfo{person}{Aldair~E. Gongora},
  \bibinfo{person}{Kelsey~L. Snapp}, \bibinfo{person}{Richard Pang},
  \bibinfo{person}{Thomas~M. Tiano}, \bibinfo{person}{Kristofer~G. Reyes},
  \bibinfo{person}{Emily Whiting}, \bibinfo{person}{Timothy~J. Lawton},
  \bibinfo{person}{Elise~F. Morgan}, {and} \bibinfo{person}{Keith~A. Brown}.}
  \bibinfo{year}{2022}\natexlab{}.
\newblock \showarticletitle{Designing lattices for impact protection using
  transfer learning}.
\newblock \bibinfo{journal}{\emph{Matter}} \bibinfo{volume}{5},
  \bibinfo{number}{9} (\bibinfo{year}{2022}), \bibinfo{pages}{2829--2846}.
\newblock
\showISSN{2590-2385}
\urldef\tempurl%
\url{https://doi.org/10.1016/j.matt.2022.06.051}
\showDOI{\tempurl}


\bibitem[\protect\citeauthoryear{Grinspun, Hirani, Desbrun, and
  Schr{\"o}der}{Grinspun et~al\mbox{.}}{2003}]%
        {grinspun2003discrete}
\bibfield{author}{\bibinfo{person}{Eitan Grinspun}, \bibinfo{person}{Anil~N
  Hirani}, \bibinfo{person}{Mathieu Desbrun}, {and} \bibinfo{person}{Peter
  Schr{\"o}der}.} \bibinfo{year}{2003}\natexlab{}.
\newblock \showarticletitle{Discrete shells}. In
  \bibinfo{booktitle}{\emph{Proceedings of the 2003 ACM SIGGRAPH/Eurographics
  symposium on Computer animation}}. Eurographics Association,
  \bibinfo{pages}{62--67}.
\newblock


\bibitem[\protect\citeauthoryear{Guseinov, Miguel, and Bickel}{Guseinov
  et~al\mbox{.}}{2017}]%
        {Guseinov17Curveups}
\bibfield{author}{\bibinfo{person}{Ruslan Guseinov}, \bibinfo{person}{Eder
  Miguel}, {and} \bibinfo{person}{Bernd Bickel}.}
  \bibinfo{year}{2017}\natexlab{}.
\newblock \showarticletitle{CurveUps: Shaping Objects from Flat Plates with
  Tension-Actuated Curvature}.
\newblock \bibinfo{journal}{\emph{ACM Trans. Graph.}} \bibinfo{volume}{36},
  \bibinfo{number}{4}, Article \bibinfo{articleno}{64} (\bibinfo{date}{July}
  \bibinfo{year}{2017}), \bibinfo{numpages}{12}~pages.
\newblock
\showISSN{0730-0301}
\urldef\tempurl%
\url{https://doi.org/10.1145/3072959.3073709}
\showDOI{\tempurl}


\bibitem[\protect\citeauthoryear{Hafner, Schumacher, Knoop, Auzinger, Bickel,
  and B\"{a}cher}{Hafner et~al\mbox{.}}{2019}]%
        {Hafner:2019}
\bibfield{author}{\bibinfo{person}{Christian Hafner},
  \bibinfo{person}{Christian Schumacher}, \bibinfo{person}{Espen Knoop},
  \bibinfo{person}{Thomas Auzinger}, \bibinfo{person}{Bernd Bickel}, {and}
  \bibinfo{person}{Moritz B\"{a}cher}.} \bibinfo{year}{2019}\natexlab{}.
\newblock \showarticletitle{X-CAD: Optimizing CAD Models with Extended Finite
  Elements}.
\newblock \bibinfo{journal}{\emph{ACM Trans. Graph.}} \bibinfo{volume}{38},
  \bibinfo{number}{6}, Article \bibinfo{articleno}{157} (\bibinfo{date}{Nov.}
  \bibinfo{year}{2019}), \bibinfo{numpages}{15}~pages.
\newblock
\showISSN{0730-0301}
\urldef\tempurl%
\url{https://doi.org/10.1145/3355089.3356576}
\showDOI{\tempurl}


\bibitem[\protect\citeauthoryear{Hauptmann and Schweizerhof}{Hauptmann and
  Schweizerhof}{1998}]%
        {Hauptmann98SolidShells}
\bibfield{author}{\bibinfo{person}{R. Hauptmann} {and} \bibinfo{person}{K.
  Schweizerhof}.} \bibinfo{year}{1998}\natexlab{}.
\newblock \showarticletitle{A systematic development of ‘solid-shell’
  element formulations for linear and non-linear analyses employing only
  displacement degrees of freedom}.
\newblock \bibinfo{journal}{\emph{Internat. J. Numer. Methods Engrg.}}
  \bibinfo{volume}{42}, \bibinfo{number}{1} (\bibinfo{year}{1998}),
  \bibinfo{pages}{49--69}.
\newblock
\urldef\tempurl%
\url{https://doi.org/10.1002/(SICI)1097-0207(19980515)42:1<49::AID-NME349>3.0.CO;2-2}
\showDOI{\tempurl}


\bibitem[\protect\citeauthoryear{Holzapfel and Gasser}{Holzapfel and
  Gasser}{2001}]%
        {Holzapfel01AViscoelastic}
\bibfield{author}{\bibinfo{person}{Gerhard~A. Holzapfel} {and}
  \bibinfo{person}{Thomas~C. Gasser}.} \bibinfo{year}{2001}\natexlab{}.
\newblock \showarticletitle{A viscoelastic model for fiber-reinforced
  composites at finite strains: Continuum basis, computational aspects and
  applications}.
\newblock \bibinfo{journal}{\emph{Computer Methods in Applied Mechanics and
  Engineering}} \bibinfo{volume}{190}, \bibinfo{number}{34}
  (\bibinfo{year}{2001}), \bibinfo{pages}{4379--4403}.
\newblock
\showISSN{0045-7825}
\urldef\tempurl%
\url{https://doi.org/10.1016/S0045-7825(00)00323-6}
\showDOI{\tempurl}


\bibitem[\protect\citeauthoryear{Jiang, Rist, Wang, Wallner, and
  Pottmann}{Jiang et~al\mbox{.}}{2022}]%
        {JIANG2022103146}
\bibfield{author}{\bibinfo{person}{Caigui Jiang}, \bibinfo{person}{Florian
  Rist}, \bibinfo{person}{Hui Wang}, \bibinfo{person}{Johannes Wallner}, {and}
  \bibinfo{person}{Helmut Pottmann}.} \bibinfo{year}{2022}\natexlab{}.
\newblock \showarticletitle{Shape-morphing mechanical metamaterials}.
\newblock \bibinfo{journal}{\emph{Computer-Aided Design}}
  \bibinfo{volume}{143} (\bibinfo{year}{2022}), \bibinfo{pages}{103146}.
\newblock
\showISSN{0010-4485}
\urldef\tempurl%
\url{https://doi.org/10.1016/j.cad.2021.103146}
\showDOI{\tempurl}


\bibitem[\protect\citeauthoryear{Jourdan, Mélina, Victor, Vouga, and
  Bousseau}{Jourdan et~al\mbox{.}}{2022a}]%
        {Jourdan2022Simulation}
\bibfield{author}{\bibinfo{person}{David Jourdan}, \bibinfo{person}{Skouras
  Mélina}, \bibinfo{person}{Romero Victor}, \bibinfo{person}{Etienne Vouga},
  {and} \bibinfo{person}{Adrien Bousseau}.} \bibinfo{year}{2022}\natexlab{a}.
\newblock \showarticletitle{Simulation of printed-on-fabric assemblies}.
  \bibinfo{publisher}{Association for Computing Machinery (ACM)},
  \bibinfo{pages}{1--11}.
\newblock
\urldef\tempurl%
\url{https://doi.org/10.1145/3559400.3562001}
\showDOI{\tempurl}


\bibitem[\protect\citeauthoryear{Jourdan, Skouras, Vouga, and Bousseau}{Jourdan
  et~al\mbox{.}}{2022b}]%
        {Jourdan2022Computational}
\bibfield{author}{\bibinfo{person}{David Jourdan}, \bibinfo{person}{Mélina
  Skouras}, \bibinfo{person}{Etienne Vouga}, {and} \bibinfo{person}{Adrien
  Bousseau}.} \bibinfo{year}{2022}\natexlab{b}.
\newblock \showarticletitle{Computational Design of Self-Actuated Surfaces by
  Printing Plastic Ribbons on Stretched Fabric}.
\newblock \bibinfo{journal}{\emph{Computer Graphics Forum}}
  \bibinfo{volume}{41} (\bibinfo{date}{5} \bibinfo{year}{2022}),
  \bibinfo{pages}{493--506}.
\newblock
Issue 2.
\showISSN{14678659}
\urldef\tempurl%
\url{https://doi.org/10.1111/cgf.14489}
\showDOI{\tempurl}


\bibitem[\protect\citeauthoryear{Jourdan, Skouras, Vouga, Bousseau, Jourdan,
  Skouras, Vouga, and Bousseau}{Jourdan et~al\mbox{.}}{2021}]%
        {Jourdan2021Printing}
\bibfield{author}{\bibinfo{person}{David Jourdan}, \bibinfo{person}{Mélina
  Skouras}, \bibinfo{person}{Etienne Vouga}, \bibinfo{person}{Adrien Bousseau},
  \bibinfo{person}{D Jourdan}, \bibinfo{person}{M Skouras}, \bibinfo{person}{E
  Vouga}, {and} \bibinfo{person}{A Bousseau}.} \bibinfo{year}{2021}\natexlab{}.
\newblock \bibinfo{title}{Printing-on-Fabric Meta-Material for Self-Shaping
  Architectural Models. AAG 2020-Advances in Architectural Geometry}.
\newblock , \bibinfo{numpages}{19}~pages.
\newblock
\urldef\tempurl%
\url{https://hal.inria.fr/hal-02925036}
\showURL{%
\tempurl}


\bibitem[\protect\citeauthoryear{Kaufmann, Martin, Botsch, Grinspun, and
  Gross}{Kaufmann et~al\mbox{.}}{2009}]%
        {Kaufmann09Enrichment}
\bibfield{author}{\bibinfo{person}{Peter Kaufmann}, \bibinfo{person}{Sebastian
  Martin}, \bibinfo{person}{Mario Botsch}, \bibinfo{person}{Eitan Grinspun},
  {and} \bibinfo{person}{Markus Gross}.} \bibinfo{year}{2009}\natexlab{}.
\newblock \showarticletitle{Enrichment Textures for Detailed Cutting of
  Shells}.
\newblock \bibinfo{journal}{\emph{ACM Trans. Graph.}} \bibinfo{volume}{28},
  \bibinfo{number}{3}, Article \bibinfo{articleno}{50} (\bibinfo{date}{jul}
  \bibinfo{year}{2009}), \bibinfo{numpages}{10}~pages.
\newblock
\showISSN{0730-0301}
\urldef\tempurl%
\url{https://doi.org/10.1145/1531326.1531356}
\showDOI{\tempurl}


\bibitem[\protect\citeauthoryear{Kharevych, Mullen, Owhadi, and
  Desbrun}{Kharevych et~al\mbox{.}}{2009}]%
        {kharevych2009numerical}
\bibfield{author}{\bibinfo{person}{Lily Kharevych}, \bibinfo{person}{Patrick
  Mullen}, \bibinfo{person}{Houman Owhadi}, {and} \bibinfo{person}{Mathieu
  Desbrun}.} \bibinfo{year}{2009}\natexlab{}.
\newblock \showarticletitle{Numerical coarsening of inhomogeneous elastic
  materials}.
\newblock \bibinfo{journal}{\emph{ACM Transactions on graphics (TOG)}}
  \bibinfo{volume}{28}, \bibinfo{number}{3} (\bibinfo{year}{2009}),
  \bibinfo{pages}{1--8}.
\newblock


\bibitem[\protect\citeauthoryear{Knöppel, Crane, Pinkall, and
  Schröder}{Knöppel et~al\mbox{.}}{2015}]%
        {Knoeppel15Stripe}
\bibfield{author}{\bibinfo{person}{Felix Knöppel}, \bibinfo{person}{Keenan
  Crane}, \bibinfo{person}{Ulrich Pinkall}, {and} \bibinfo{person}{Peter
  Schröder}.} \bibinfo{year}{2015}\natexlab{}.
\newblock \showarticletitle{Stripe patterns on surfaces}.
\newblock \bibinfo{journal}{\emph{ACM Transactions on Graphics}}
  \bibinfo{volume}{34}.
\newblock
Issue 4.
\showISBNx{9781450333313}
\showISSN{15577368}
\urldef\tempurl%
\url{https://doi.org/10.1145/2767000}
\showDOI{\tempurl}


\bibitem[\protect\citeauthoryear{Ko and Lee}{Ko and Lee}{2017}]%
        {Ko17A6Node}
\bibfield{author}{\bibinfo{person}{Yeongbin Ko} {and}
  \bibinfo{person}{Phill~Seung Lee}.} \bibinfo{year}{2017}\natexlab{}.
\newblock \showarticletitle{A 6-node triangular solid-shell element for linear
  and nonlinear analysis}.
\newblock \bibinfo{journal}{\emph{Internat. J. Numer. Methods Engrg.}}
  \bibinfo{volume}{111} (\bibinfo{date}{9} \bibinfo{year}{2017}),
  \bibinfo{pages}{1203--1230}.
\newblock
Issue 13.
\showISSN{10970207}
\urldef\tempurl%
\url{https://doi.org/10.1002/nme.5498}
\showDOI{\tempurl}


\bibitem[\protect\citeauthoryear{Konakovi\'{c}-Lukovi\'{c}, Panetta, Crane, and
  Pauly}{Konakovi\'{c}-Lukovi\'{c} et~al\mbox{.}}{2018}]%
        {KonakovicLukovic2018}
\bibfield{author}{\bibinfo{person}{Mina Konakovi\'{c}-Lukovi\'{c}},
  \bibinfo{person}{Julian Panetta}, \bibinfo{person}{Keenan Crane}, {and}
  \bibinfo{person}{Mark Pauly}.} \bibinfo{year}{2018}\natexlab{}.
\newblock \showarticletitle{Rapid Deployment of Curved Surfaces via
  Programmable Auxetics}.
\newblock \bibinfo{journal}{\emph{ACM Trans. Graph.}} \bibinfo{volume}{37},
  \bibinfo{number}{4}, Article \bibinfo{articleno}{106} (\bibinfo{date}{July}
  \bibinfo{year}{2018}), \bibinfo{numpages}{13}~pages.
\newblock
\showISSN{0730-0301}
\urldef\tempurl%
\url{https://doi.org/10.1145/3197517.3201373}
\showDOI{\tempurl}


\bibitem[\protect\citeauthoryear{Koschier, Bender, and Thuerey}{Koschier
  et~al\mbox{.}}{2017}]%
        {Koschier2017}
\bibfield{author}{\bibinfo{person}{Dan Koschier}, \bibinfo{person}{Jan Bender},
  {and} \bibinfo{person}{Nils Thuerey}.} \bibinfo{year}{2017}\natexlab{}.
\newblock \showarticletitle{Robust extended finite elements for complex cuting
  of deformables}.
\newblock \bibinfo{journal}{\emph{ACM Transactions on Graphics}}
  \bibinfo{volume}{36}.
\newblock
Issue 4.
\showISSN{15577368}
\urldef\tempurl%
\url{https://doi.org/10.1145/3072959.3073666}
\showDOI{\tempurl}


\bibitem[\protect\citeauthoryear{Leimer and Musialski}{Leimer and
  Musialski}{2020}]%
        {Leimer2020}
\bibfield{author}{\bibinfo{person}{Kurt Leimer} {and}
  \bibinfo{person}{Przemyslaw Musialski}.} \bibinfo{year}{2020}\natexlab{}.
\newblock \showarticletitle{Reduced-Order Simulation of Flexible
  Meta-Materials}.
\newblock \bibinfo{journal}{\emph{Proceedings - SCF 2020: ACM Symposium on
  Computational Fabrication}}.
\newblock
\showISBNx{9781450381703}
\urldef\tempurl%
\url{https://doi.org/10.1145/3424630.3425411}
\showDOI{\tempurl}


\bibitem[\protect\citeauthoryear{Li, Ferguson, Schneider, Langlois, Zorin,
  Panozzo, Jiang, and Kaufman}{Li et~al\mbox{.}}{2020}]%
        {li2020incremental}
\bibfield{author}{\bibinfo{person}{Minchen Li}, \bibinfo{person}{Zachary
  Ferguson}, \bibinfo{person}{Teseo Schneider}, \bibinfo{person}{Timothy~R
  Langlois}, \bibinfo{person}{Denis Zorin}, \bibinfo{person}{Daniele Panozzo},
  \bibinfo{person}{Chenfanfu Jiang}, {and} \bibinfo{person}{Danny~M Kaufman}.}
  \bibinfo{year}{2020}\natexlab{}.
\newblock \showarticletitle{Incremental potential contact: intersection-and
  inversion-free, large-deformation dynamics.}
\newblock \bibinfo{journal}{\emph{ACM Trans. Graph.}} \bibinfo{volume}{39},
  \bibinfo{number}{4} (\bibinfo{year}{2020}), \bibinfo{pages}{49}.
\newblock


\bibitem[\protect\citeauthoryear{Liu, Jacobson, and Ovsjanikov}{Liu
  et~al\mbox{.}}{2019}]%
        {Liu19Spectral}
\bibfield{author}{\bibinfo{person}{Hsueh-Ti~Derek Liu}, \bibinfo{person}{Alec
  Jacobson}, {and} \bibinfo{person}{Maks Ovsjanikov}.}
  \bibinfo{year}{2019}\natexlab{}.
\newblock \showarticletitle{Spectral Coarsening of Geometric Operators}.
\newblock \bibinfo{journal}{\emph{ACM Trans. Graph.}} \bibinfo{volume}{38},
  \bibinfo{number}{4}, Article \bibinfo{articleno}{105} (\bibinfo{date}{jul}
  \bibinfo{year}{2019}), \bibinfo{numpages}{13}~pages.
\newblock
\showISSN{0730-0301}
\urldef\tempurl%
\url{https://doi.org/10.1145/3306346.3322953}
\showDOI{\tempurl}


\bibitem[\protect\citeauthoryear{Liu, Hu, Xu, Song, Zhang, Bickel, and Fu}{Liu
  et~al\mbox{.}}{2022}]%
        {liu2022rigidity}
\bibfield{author}{\bibinfo{person}{Zhenyuan Liu}, \bibinfo{person}{Jingyu Hu},
  \bibinfo{person}{Hao Xu}, \bibinfo{person}{Peng Song}, \bibinfo{person}{Ran
  Zhang}, \bibinfo{person}{Bernd Bickel}, {and} \bibinfo{person}{Chi-Wing Fu}.}
  \bibinfo{year}{2022}\natexlab{}.
\newblock \showarticletitle{Worst-Case Rigidity Analysis and Optimization for
  Assemblies with Mechanical Joints}.
\newblock \bibinfo{journal}{\emph{Computer Graphics Forum}}
  \bibinfo{volume}{41}, \bibinfo{number}{2} (\bibinfo{year}{2022}).
\newblock
\urldef\tempurl%
\url{https://doi.org/10.1111/cgf.14490}
\showDOI{\tempurl}


\bibitem[\protect\citeauthoryear{Malomo, P\'{e}rez, Iarussi, Pietroni, Miguel,
  Cignoni, and Bickel}{Malomo et~al\mbox{.}}{2018}]%
        {Malomo18Flexmaps}
\bibfield{author}{\bibinfo{person}{Luigi Malomo}, \bibinfo{person}{Jes\'{u}s
  P\'{e}rez}, \bibinfo{person}{Emmanuel Iarussi}, \bibinfo{person}{Nico
  Pietroni}, \bibinfo{person}{Eder Miguel}, \bibinfo{person}{Paolo Cignoni},
  {and} \bibinfo{person}{Bernd Bickel}.} \bibinfo{year}{2018}\natexlab{}.
\newblock \showarticletitle{FlexMaps: Computational Design of Flat Flexible
  Shells for Shaping 3D Objects}.
\newblock \bibinfo{journal}{\emph{ACM Trans. Graph.}} \bibinfo{volume}{37},
  \bibinfo{number}{6}, Article \bibinfo{articleno}{241} (\bibinfo{date}{Dec.}
  \bibinfo{year}{2018}), \bibinfo{numpages}{14}~pages.
\newblock
\showISSN{0730-0301}
\urldef\tempurl%
\url{https://doi.org/10.1145/3272127.3275076}
\showDOI{\tempurl}


\bibitem[\protect\citeauthoryear{Mart{\'\i}nez, Dumas, and
  Lefebvre}{Mart{\'\i}nez et~al\mbox{.}}{2016}]%
        {martinez2016procedural}
\bibfield{author}{\bibinfo{person}{Jon{\`a}s Mart{\'\i}nez},
  \bibinfo{person}{J{\'e}r{\'e}mie Dumas}, {and} \bibinfo{person}{Sylvain
  Lefebvre}.} \bibinfo{year}{2016}\natexlab{}.
\newblock \showarticletitle{Procedural voronoi foams for additive
  manufacturing}.
\newblock \bibinfo{journal}{\emph{ACM Transactions on Graphics (TOG)}}
  \bibinfo{volume}{35}, \bibinfo{number}{4} (\bibinfo{year}{2016}),
  \bibinfo{pages}{1--12}.
\newblock


\bibitem[\protect\citeauthoryear{Mart{\'\i}nez, Skouras, Schumacher, Hornus,
  Lefebvre, and Thomaszewski}{Mart{\'\i}nez et~al\mbox{.}}{2019}]%
        {martinez2019star}
\bibfield{author}{\bibinfo{person}{Jon{\`a}s Mart{\'\i}nez},
  \bibinfo{person}{M{\'e}lina Skouras}, \bibinfo{person}{Christian Schumacher},
  \bibinfo{person}{Samuel Hornus}, \bibinfo{person}{Sylvain Lefebvre}, {and}
  \bibinfo{person}{Bernhard Thomaszewski}.} \bibinfo{year}{2019}\natexlab{}.
\newblock \showarticletitle{Star-shaped metrics for mechanical metamaterial
  design}.
\newblock \bibinfo{journal}{\emph{ACM Transactions on Graphics (TOG)}}
  \bibinfo{volume}{38}, \bibinfo{number}{4} (\bibinfo{year}{2019}),
  \bibinfo{pages}{1--13}.
\newblock


\bibitem[\protect\citeauthoryear{Mart{\'\i}nez, Song, Dumas, and
  Lefebvre}{Mart{\'\i}nez et~al\mbox{.}}{2017}]%
        {martinez2017orthotropic}
\bibfield{author}{\bibinfo{person}{Jon{\`a}s Mart{\'\i}nez},
  \bibinfo{person}{Haichuan Song}, \bibinfo{person}{J{\'e}r{\'e}mie Dumas},
  {and} \bibinfo{person}{Sylvain Lefebvre}.} \bibinfo{year}{2017}\natexlab{}.
\newblock \showarticletitle{Orthotropic k-nearest foams for additive
  manufacturing}.
\newblock \bibinfo{journal}{\emph{ACM Transactions on Graphics (TOG)}}
  \bibinfo{volume}{36}, \bibinfo{number}{4} (\bibinfo{year}{2017}),
  \bibinfo{pages}{1--12}.
\newblock


\bibitem[\protect\citeauthoryear{Mo{\"e}s, Cloirec, Cartraud, and
  Remacle}{Mo{\"e}s et~al\mbox{.}}{2003}]%
        {moes2003computational}
\bibfield{author}{\bibinfo{person}{Nicolas Mo{\"e}s}, \bibinfo{person}{Mathieu
  Cloirec}, \bibinfo{person}{Patrice Cartraud}, {and} \bibinfo{person}{J-F
  Remacle}.} \bibinfo{year}{2003}\natexlab{}.
\newblock \showarticletitle{A computational approach to handle complex
  microstructure geometries}.
\newblock \bibinfo{journal}{\emph{Computer methods in applied mechanics and
  engineering}} \bibinfo{volume}{192}, \bibinfo{number}{28-30}
  (\bibinfo{year}{2003}), \bibinfo{pages}{3163--3177}.
\newblock


\bibitem[\protect\citeauthoryear{Montes, Thomaszewski, Mudur, and Popa}{Montes
  et~al\mbox{.}}{2020}]%
        {Montes20Skintight}
\bibfield{author}{\bibinfo{person}{Juan Montes}, \bibinfo{person}{Bernhard
  Thomaszewski}, \bibinfo{person}{Sudhir Mudur}, {and} \bibinfo{person}{Tiberiu
  Popa}.} \bibinfo{year}{2020}\natexlab{}.
\newblock \showarticletitle{Computational Design of Skintight Clothing}.
\newblock \bibinfo{journal}{\emph{ACM Trans. Graph.}} \bibinfo{volume}{39},
  \bibinfo{number}{4}, Article \bibinfo{articleno}{105} (\bibinfo{date}{jul}
  \bibinfo{year}{2020}), \bibinfo{numpages}{12}~pages.
\newblock
\showISSN{0730-0301}
\urldef\tempurl%
\url{https://doi.org/10.1145/3386569.3392477}
\showDOI{\tempurl}


\bibitem[\protect\citeauthoryear{Moore, Karamouzas, and Porter}{Moore
  et~al\mbox{.}}{2018}]%
        {Moore2018}
\bibfield{author}{\bibinfo{person}{Ella Moore}, \bibinfo{person}{Ioannis
  Karamouzas}, {and} \bibinfo{person}{Michael Porter}.}
  \bibinfo{year}{2018}\natexlab{}.
\newblock \showarticletitle{Precision control of tensile properties in fabric
  for computational fabrication}.
\newblock \bibinfo{journal}{\emph{Proceedings - SCF 2018: ACM Symposium on
  Computational Fabrication}}.
\newblock
\showISBNx{9781450358545}
\urldef\tempurl%
\url{https://doi.org/10.1145/3213512.3213514}
\showDOI{\tempurl}


\bibitem[\protect\citeauthoryear{Musialski, Hafner, Rist, Birsak, Wimmer, and
  Kobbelt}{Musialski et~al\mbox{.}}{2016}]%
        {Musialski:2016:NSO}
\bibfield{author}{\bibinfo{person}{Przemyslaw Musialski},
  \bibinfo{person}{Christian Hafner}, \bibinfo{person}{Florian Rist},
  \bibinfo{person}{Michael Birsak}, \bibinfo{person}{Michael Wimmer}, {and}
  \bibinfo{person}{Leif Kobbelt}.} \bibinfo{year}{2016}\natexlab{}.
\newblock \showarticletitle{Non-linear Shape Optimization Using Local Subspace
  Projections}.
\newblock \bibinfo{journal}{\emph{ACM Trans. Graph.}} \bibinfo{volume}{35},
  \bibinfo{number}{4}, Article \bibinfo{articleno}{87} (\bibinfo{date}{July}
  \bibinfo{year}{2016}), \bibinfo{numpages}{13}~pages.
\newblock
\showISSN{0730-0301}
\urldef\tempurl%
\url{https://doi.org/10.1145/2897824.2925886}
\showDOI{\tempurl}


\bibitem[\protect\citeauthoryear{Neveu, Puhachov, Thomaszewski, and
  Bessmeltsev}{Neveu et~al\mbox{.}}{2022}]%
        {Neveu22Stability}
\bibfield{author}{\bibinfo{person}{William Neveu}, \bibinfo{person}{Ivan
  Puhachov}, \bibinfo{person}{Bernhard Thomaszewski}, {and}
  \bibinfo{person}{Mikhail Bessmeltsev}.} \bibinfo{year}{2022}\natexlab{}.
\newblock \showarticletitle{Stability-Aware Simplification of Curve Networks}.
  In \bibinfo{booktitle}{\emph{ACM SIGGRAPH 2022 Conference Proceedings}}
  (Vancouver, BC, Canada) \emph{(\bibinfo{series}{SIGGRAPH '22})}.
  \bibinfo{publisher}{Association for Computing Machinery},
  \bibinfo{address}{New York, NY, USA}, Article \bibinfo{articleno}{20},
  \bibinfo{numpages}{9}~pages.
\newblock
\showISBNx{9781450393379}
\urldef\tempurl%
\url{https://doi.org/10.1145/3528233.3530711}
\showDOI{\tempurl}


\bibitem[\protect\citeauthoryear{Panetta, Isvoranu, Chen, Siéfert, Roman, and
  Pauly}{Panetta et~al\mbox{.}}{2021}]%
        {Panetta2021}
\bibfield{author}{\bibinfo{person}{Julian Panetta}, \bibinfo{person}{Florin
  Isvoranu}, \bibinfo{person}{Tian Chen}, \bibinfo{person}{Emmanuel Siéfert},
  \bibinfo{person}{Benoît Roman}, {and} \bibinfo{person}{Mark Pauly}.}
  \bibinfo{year}{2021}\natexlab{}.
\newblock \showarticletitle{Computational inverse design of surface-based
  inflatables}.
\newblock \bibinfo{journal}{\emph{ACM Transactions on Graphics}}
  \bibinfo{volume}{40} (\bibinfo{date}{7} \bibinfo{year}{2021}).
\newblock
Issue 4.
\showISSN{15577368}
\urldef\tempurl%
\url{https://doi.org/10.1145/3450626.3459789}
\showDOI{\tempurl}


\bibitem[\protect\citeauthoryear{Panetta, Konakoviundefined-Lukoviundefined,
  Isvoranu, Bouleau, and Pauly}{Panetta et~al\mbox{.}}{2019}]%
        {Panetta19XShells}
\bibfield{author}{\bibinfo{person}{J. Panetta}, \bibinfo{person}{M.
  Konakoviundefined-Lukoviundefined}, \bibinfo{person}{F. Isvoranu},
  \bibinfo{person}{E. Bouleau}, {and} \bibinfo{person}{M. Pauly}.}
  \bibinfo{year}{2019}\natexlab{}.
\newblock \showarticletitle{X-Shells: A New Class of Deployable Beam
  Structures}.
\newblock \bibinfo{journal}{\emph{ACM Trans. Graph.}} \bibinfo{volume}{38},
  \bibinfo{number}{4}, Article \bibinfo{articleno}{83} (\bibinfo{date}{July}
  \bibinfo{year}{2019}), \bibinfo{numpages}{15}~pages.
\newblock
\showISSN{0730-0301}
\urldef\tempurl%
\url{https://doi.org/10.1145/3306346.3323040}
\showDOI{\tempurl}


\bibitem[\protect\citeauthoryear{Panetta, Rahimian, and Zorin}{Panetta
  et~al\mbox{.}}{2017}]%
        {panetta2017worst}
\bibfield{author}{\bibinfo{person}{Julian Panetta}, \bibinfo{person}{Abtin
  Rahimian}, {and} \bibinfo{person}{Denis Zorin}.}
  \bibinfo{year}{2017}\natexlab{}.
\newblock \showarticletitle{Worst-case stress relief for microstructures}.
\newblock \bibinfo{journal}{\emph{ACM Transactions on Graphics (TOG)}}
  \bibinfo{volume}{36}, \bibinfo{number}{4} (\bibinfo{year}{2017}),
  \bibinfo{pages}{1--16}.
\newblock


\bibitem[\protect\citeauthoryear{Panetta, Zhou, Malomo, Pietroni, Cignoni, and
  Zorin}{Panetta et~al\mbox{.}}{2015}]%
        {panetta2015elastic}
\bibfield{author}{\bibinfo{person}{Julian Panetta}, \bibinfo{person}{Qingnan
  Zhou}, \bibinfo{person}{Luigi Malomo}, \bibinfo{person}{Nico Pietroni},
  \bibinfo{person}{Paolo Cignoni}, {and} \bibinfo{person}{Denis Zorin}.}
  \bibinfo{year}{2015}\natexlab{}.
\newblock \showarticletitle{Elastic textures for additive fabrication}.
\newblock \bibinfo{journal}{\emph{ACM Transactions on Graphics (TOG)}}
  \bibinfo{volume}{34}, \bibinfo{number}{4} (\bibinfo{year}{2015}),
  \bibinfo{pages}{1--12}.
\newblock


\bibitem[\protect\citeauthoryear{P{\'e}rez, Otaduy, and Thomaszewski}{P{\'e}rez
  et~al\mbox{.}}{2017}]%
        {Perez:2017:CDA}
\bibfield{author}{\bibinfo{person}{Jes\'{u}s P{\'e}rez},
  \bibinfo{person}{Miguel~A. Otaduy}, {and} \bibinfo{person}{Bernhard
  Thomaszewski}.} \bibinfo{year}{2017}\natexlab{}.
\newblock \showarticletitle{Computational Design and Automated Fabrication of
  Kirchhoff-plateau Surfaces}.
\newblock \bibinfo{journal}{\emph{ACM Trans. Graph.}} \bibinfo{volume}{36},
  \bibinfo{number}{4}, Article \bibinfo{articleno}{62} (\bibinfo{date}{July}
  \bibinfo{year}{2017}), \bibinfo{numpages}{12}~pages.
\newblock
\showISSN{0730-0301}
\urldef\tempurl%
\url{https://doi.org/10.1145/3072959.3073695}
\showDOI{\tempurl}


\bibitem[\protect\citeauthoryear{Pillwein and Musialski}{Pillwein and
  Musialski}{2021}]%
        {Pillwein2021}
\bibfield{author}{\bibinfo{person}{Stefan Pillwein} {and}
  \bibinfo{person}{Przemyslaw Musialski}.} \bibinfo{year}{2021}\natexlab{}.
\newblock \showarticletitle{Generalized deployable elastic geodesic grids}.
\newblock \bibinfo{journal}{\emph{ACM Transactions on Graphics}}
  \bibinfo{volume}{40} (\bibinfo{date}{12} \bibinfo{year}{2021}).
\newblock
Issue 6.
\showISSN{15577368}
\urldef\tempurl%
\url{https://doi.org/10.1145/3478513.3480516}
\showDOI{\tempurl}


\bibitem[\protect\citeauthoryear{Sati, Karamouzas, and Zordan}{Sati
  et~al\mbox{.}}{2021}]%
        {Sati21Digisew}
\bibfield{author}{\bibinfo{person}{Abhinit Sati}, \bibinfo{person}{Ioannis
  Karamouzas}, {and} \bibinfo{person}{Victor Zordan}.}
  \bibinfo{year}{2021}\natexlab{}.
\newblock \bibinfo{booktitle}{\emph{DIGISEW: Anisotropic Stitching for Variable
  Stretch in Textiles}}.
\newblock \bibinfo{publisher}{Association for Computing Machinery},
  \bibinfo{address}{New York, NY, USA}.
\newblock
\showISBNx{9781450390903}
\urldef\tempurl%
\url{https://doi.org/10.1145/3485114.3485121}
\showURL{%
\tempurl}


\bibitem[\protect\citeauthoryear{Schumacher, Bickel, Rys, Marschner, Daraio,
  and Gross}{Schumacher et~al\mbox{.}}{2015}]%
        {schumacher2015microstructures}
\bibfield{author}{\bibinfo{person}{Christian Schumacher},
  \bibinfo{person}{Bernd Bickel}, \bibinfo{person}{Jan Rys},
  \bibinfo{person}{Steve Marschner}, \bibinfo{person}{Chiara Daraio}, {and}
  \bibinfo{person}{Markus Gross}.} \bibinfo{year}{2015}\natexlab{}.
\newblock \showarticletitle{Microstructures to control elasticity in 3D
  printing}.
\newblock \bibinfo{journal}{\emph{ACM Transactions on Graphics (Tog)}}
  \bibinfo{volume}{34}, \bibinfo{number}{4} (\bibinfo{year}{2015}),
  \bibinfo{pages}{1--13}.
\newblock


\bibitem[\protect\citeauthoryear{Schumacher, Marschner, Gross, and
  Thomaszewski}{Schumacher et~al\mbox{.}}{2018}]%
        {Schumacher18Mechanical}
\bibfield{author}{\bibinfo{person}{Christian Schumacher},
  \bibinfo{person}{Steve Marschner}, \bibinfo{person}{Markus Gross}, {and}
  \bibinfo{person}{Bernhard Thomaszewski}.} \bibinfo{year}{2018}\natexlab{}.
\newblock \showarticletitle{Mechanical Characterization of Structured Sheet
  Materials}.
\newblock \bibinfo{journal}{\emph{ACM Trans. Graph.}} \bibinfo{volume}{37},
  \bibinfo{number}{4}, Article \bibinfo{articleno}{148} (\bibinfo{date}{July}
  \bibinfo{year}{2018}), \bibinfo{numpages}{15}~pages.
\newblock
\showISSN{0730-0301}
\urldef\tempurl%
\url{https://doi.org/10.1145/3197517.3201278}
\showDOI{\tempurl}


\bibitem[\protect\citeauthoryear{Skouras, Thomaszewski, Kaufmann, Garg, Bickel,
  Grinspun, and Gross}{Skouras et~al\mbox{.}}{2014}]%
        {Skouras14DIS}
\bibfield{author}{\bibinfo{person}{M{\'e}lina Skouras},
  \bibinfo{person}{Bernhard Thomaszewski}, \bibinfo{person}{Peter Kaufmann},
  \bibinfo{person}{Akash Garg}, \bibinfo{person}{Bernd Bickel},
  \bibinfo{person}{Eitan Grinspun}, {and} \bibinfo{person}{Markus Gross}.}
  \bibinfo{year}{2014}\natexlab{}.
\newblock \showarticletitle{Designing Inflatable Structures}.
\newblock  \bibinfo{volume}{33}, \bibinfo{number}{4} (\bibinfo{year}{2014}).
\newblock


\bibitem[\protect\citeauthoryear{Svanberg}{Svanberg}{1995}]%
        {svanberg1995globally}
\bibfield{author}{\bibinfo{person}{Krister Svanberg}.}
  \bibinfo{year}{1995}\natexlab{}.
\newblock \showarticletitle{A globally convergent version of MMA without
  linesearch}. In \bibinfo{booktitle}{\emph{Proceedings of the first world
  congress of structural and multidisciplinary optimization}},
  Vol.~\bibinfo{volume}{28}. Goslar, Germany, \bibinfo{pages}{9--16}.
\newblock


\bibitem[\protect\citeauthoryear{Thomaszewski, Coros, Gauge, Megaro, Grinspun,
  and Gross}{Thomaszewski et~al\mbox{.}}{2014}]%
        {Thomaszewski14Computational}
\bibfield{author}{\bibinfo{person}{Bernhard Thomaszewski},
  \bibinfo{person}{Stelian Coros}, \bibinfo{person}{Damien Gauge},
  \bibinfo{person}{Vittorio Megaro}, \bibinfo{person}{Eitan Grinspun}, {and}
  \bibinfo{person}{Markus Gross}.} \bibinfo{year}{2014}\natexlab{}.
\newblock \showarticletitle{Computational Design of Linkage-Based Characters}.
\newblock \bibinfo{journal}{\emph{ACM Trans. Graph.}} \bibinfo{volume}{33},
  \bibinfo{number}{4}, Article \bibinfo{articleno}{64} (\bibinfo{date}{July}
  \bibinfo{year}{2014}), \bibinfo{numpages}{9}~pages.
\newblock
\showISSN{0730-0301}
\urldef\tempurl%
\url{https://doi.org/10.1145/2601097.2601143}
\showDOI{\tempurl}


\bibitem[\protect\citeauthoryear{Tozoni, Dumas, Jiang, Panetta, Panozzo, and
  Zorin}{Tozoni et~al\mbox{.}}{2020}]%
        {tozoni2020low}
\bibfield{author}{\bibinfo{person}{Davi~Colli Tozoni},
  \bibinfo{person}{J{\'e}r{\'e}mie Dumas}, \bibinfo{person}{Zhongshi Jiang},
  \bibinfo{person}{Julian Panetta}, \bibinfo{person}{Daniele Panozzo}, {and}
  \bibinfo{person}{Denis Zorin}.} \bibinfo{year}{2020}\natexlab{}.
\newblock \showarticletitle{A low-parametric rhombic microstructure family for
  irregular lattices}.
\newblock \bibinfo{journal}{\emph{ACM Transactions on Graphics (TOG)}}
  \bibinfo{volume}{39}, \bibinfo{number}{4} (\bibinfo{year}{2020}),
  \bibinfo{pages}{101--1}.
\newblock


\bibitem[\protect\citeauthoryear{Tricard, Tavernier, Zanni, Mart{\'\i}nez,
  Hugron, Neyret, and Lefebvre}{Tricard et~al\mbox{.}}{2020}]%
        {tricard2020freely}
\bibfield{author}{\bibinfo{person}{Thibault Tricard}, \bibinfo{person}{Vincent
  Tavernier}, \bibinfo{person}{C{\'e}dric Zanni}, \bibinfo{person}{Jon{\`a}s
  Mart{\'\i}nez}, \bibinfo{person}{Pierre-Alexandre Hugron},
  \bibinfo{person}{Fabrice Neyret}, {and} \bibinfo{person}{Sylvain Lefebvre}.}
  \bibinfo{year}{2020}\natexlab{}.
\newblock \showarticletitle{Freely orientable microstructures for designing
  deformable 3D prints.}
\newblock \bibinfo{journal}{\emph{ACM Trans. Graph.}} \bibinfo{volume}{39},
  \bibinfo{number}{6} (\bibinfo{year}{2020}), \bibinfo{pages}{211--1}.
\newblock


\bibitem[\protect\citeauthoryear{Umetani, Mitani, and Igarashi}{Umetani
  et~al\mbox{.}}{2010}]%
        {umetani10Designing}
\bibfield{author}{\bibinfo{person}{Nobuyuki Umetani}, \bibinfo{person}{Jun
  Mitani}, {and} \bibinfo{person}{Takeo Igarashi}.}
  \bibinfo{year}{2010}\natexlab{}.
\newblock \showarticletitle{{Designing Custom-made Metallophone with Concurrent
  Eigenanalysis}}. In \bibinfo{booktitle}{\emph{{Proceedings of the
  International Conference on New Interfaces for Musical Expression}}}.
  \bibinfo{publisher}{Zenodo}, \bibinfo{pages}{26--30}.
\newblock
\urldef\tempurl%
\url{https://doi.org/10.5281/zenodo.1177917}
\showDOI{\tempurl}


\bibitem[\protect\citeauthoryear{Vechev, Zarate, Thomaszewski, and
  Hilliges}{Vechev et~al\mbox{.}}{2022}]%
        {Vechev2022Computational}
\bibfield{author}{\bibinfo{person}{Velko Vechev}, \bibinfo{person}{Juan
  Zarate}, \bibinfo{person}{Bernhard Thomaszewski}, {and}
  \bibinfo{person}{Otmar Hilliges}.} \bibinfo{year}{2022}\natexlab{}.
\newblock \showarticletitle{{Computational Design of Kinesthetic Garments}}.
\newblock \bibinfo{journal}{\emph{Computer Graphics Forum}}
  (\bibinfo{year}{2022}).
\newblock
\showISSN{1467-8659}
\urldef\tempurl%
\url{https://doi.org/10.1111/cgf.14492}
\showDOI{\tempurl}


\bibitem[\protect\citeauthoryear{Yang, Vella, and Holmes}{Yang
  et~al\mbox{.}}{2021}]%
        {Yang21Grasping}
\bibfield{author}{\bibinfo{person}{Yi Yang}, \bibinfo{person}{Katherine Vella},
  {and} \bibinfo{person}{Douglas~P. Holmes}.} \bibinfo{year}{2021}\natexlab{}.
\newblock \showarticletitle{Grasping with kirigami shells}.
\newblock \bibinfo{journal}{\emph{Science Robotics}} \bibinfo{volume}{6},
  \bibinfo{number}{54} (\bibinfo{year}{2021}), \bibinfo{pages}{eabd6426}.
\newblock
\urldef\tempurl%
\url{https://doi.org/10.1126/scirobotics.abd6426}
\showDOI{\tempurl}
\showeprint{https://www.science.org/doi/pdf/10.1126/scirobotics.abd6426}


\bibitem[\protect\citeauthoryear{Zehnder, Coros, and Thomaszewski}{Zehnder
  et~al\mbox{.}}{2016}]%
        {Zehnder16Ornamental}
\bibfield{author}{\bibinfo{person}{Jonas Zehnder}, \bibinfo{person}{Stelian
  Coros}, {and} \bibinfo{person}{Bernhard Thomaszewski}.}
  \bibinfo{year}{2016}\natexlab{}.
\newblock \showarticletitle{Designing Structurally-Sound Ornamental Curve
  Networks}.
\newblock \bibinfo{journal}{\emph{ACM Trans. Graph.}} \bibinfo{volume}{35},
  \bibinfo{number}{4}, Article \bibinfo{articleno}{99} (\bibinfo{date}{July}
  \bibinfo{year}{2016}), \bibinfo{numpages}{10}~pages.
\newblock
\showISSN{0730-0301}
\urldef\tempurl%
\url{https://doi.org/10.1145/2897824.2925888}
\showDOI{\tempurl}


\bibitem[\protect\citeauthoryear{Zehnder, Knoop, B{\"a}cher, and
  Thomaszewski}{Zehnder et~al\mbox{.}}{2017}]%
        {zehnder2017metasilicone}
\bibfield{author}{\bibinfo{person}{Jonas Zehnder}, \bibinfo{person}{Espen
  Knoop}, \bibinfo{person}{Moritz B{\"a}cher}, {and} \bibinfo{person}{Bernhard
  Thomaszewski}.} \bibinfo{year}{2017}\natexlab{}.
\newblock \showarticletitle{Metasilicone: design and fabrication of composite
  silicone with desired mechanical properties}.
\newblock \bibinfo{journal}{\emph{ACM Transactions on Graphics (TOG)}}
  \bibinfo{volume}{36}, \bibinfo{number}{6} (\bibinfo{year}{2017}),
  \bibinfo{pages}{1--13}.
\newblock


\bibitem[\protect\citeauthoryear{Zhu, Skouras, Chen, and Matusik}{Zhu
  et~al\mbox{.}}{2017}]%
        {zhu2017two}
\bibfield{author}{\bibinfo{person}{Bo Zhu}, \bibinfo{person}{M{\'e}lina
  Skouras}, \bibinfo{person}{Desai Chen}, {and} \bibinfo{person}{Wojciech
  Matusik}.} \bibinfo{year}{2017}\natexlab{}.
\newblock \showarticletitle{Two-scale topology optimization with
  microstructures}.
\newblock \bibinfo{journal}{\emph{ACM Transactions on Graphics (TOG)}}
  \bibinfo{volume}{36}, \bibinfo{number}{4} (\bibinfo{year}{2017}),
  \bibinfo{pages}{1}.
\newblock


\end{thebibliography}

\end{document}